\documentclass[12pt]{amsart}

\usepackage{amsmath,textcomp,mathrsfs,textcomp,latexsym}



\usepackage{xcolor}
\usepackage{soul}

\definecolor{yel}{rgb}{1,.9,0}
\sethlcolor{yel}

\usepackage{color}
\definecolor{red-}{rgb}{1.0,0.0,0.0}
\definecolor{green-}{rgb}{0.0,0.7,0.0}
\definecolor{brown-}{rgb}{0.9,0.6,0.0}



\usepackage{amsfonts}
\usepackage{amssymb}
\usepackage[mathscr]{euscript}
\usepackage{amsmath}
\usepackage{amsthm}
\usepackage[all]{xy}
\usepackage{hyperref}
\hypersetup{
    colorlinks=true,
    linkcolor=blue,
    citecolor=blue,
    filecolor=blue,
    urlcolor=blue
}

\newtheorem{thm}{Theorem}
\newtheorem{prop}[thm]{Proposition}
\newtheorem{cor}[thm]{Corollary}
\newtheorem{lem}[thm]{Lemma}
\theoremstyle{definition}
\newtheorem{defi}[thm]{Definition}
\newtheorem{obs}[thm]{Remark}
\newtheorem{ex}[thm]{Example}
\newtheorem{prog}{Program}

\def\Z{\mathbb{Z}}

\def\F{\mathbb{F}}
\def\P{\mathbb{P}}
\def\C{\mathscr{C}\subseteq\mathbb{F}_q^n}
\def\T{\mathcal{T}}

\usepackage{multicol}

\usepackage{amscd}
\usepackage{enumerate}

\usepackage{graphicx}

\usepackage{multirow}

\setcounter{MaxMatrixCols}{30}

\usepackage{verbatim}

\addtolength{\oddsidemargin}{-.575in}
    \addtolength{\evensidemargin}{-.575in}
    \addtolength{\textwidth}{1.15in}

    \addtolength{\topmargin}{-.175in}
    \addtolength{\textheight}{0.35in}

\begin{document}

\title{On the duals codes of skew constacyclic codes}

\author{Alexis E. Almendras Valdebenito and Andrea Luigi Tironi}

\date{\today}

\address{{\small Departamento de Matem\'atica,
Universidad de Concepci\'on,
Casilla 160-C,
Concepci\'on, Chile}}
\email{alexisalmendras@udec.cl,\ atironi@udec.cl}

\subjclass[2010]{Primary: 12Y05, 16Z05; Secondary: 94B05, 94B35. Key words and phrases: finite fields,
constacyclic codes, dual codes, skew polynomial rings, semi-linear maps.}

\maketitle

\begin{abstract}
Let $\mathbb{F}_q$ be a finite field with $q$ elements and denote
by $\theta : \mathbb{F}_q\to\mathbb{F}_q$ an automorphism of
$\mathbb{F}_q$. In this paper, we deal with skew constacyclic codes, that is,
linear codes of $\mathbb{F}_q^n$ which are invariant under the action of a semi-linear map
$\phi_{\alpha,\theta}:\mathbb{F}_q^n\to\mathbb{F}_q^n$, defined by $\phi_{\alpha,\theta}(a_0,...,a_{n-2}, a_{n-1}):=(\alpha \theta(a_{n-1}),\theta(a_0),...,\theta(a_{n-2}))$
for some $\alpha\in\mathbb{F}_q\setminus\{0\}$ and $n\geq 2$. In particular,
we study some algebraic and geometric properties of their dual codes and we give some consequences and research results on $1$-generator skew quasi-twisted codes and on MDS skew constacyclic codes.
\end{abstract}

\tableofcontents

\section*{Introduction}

\bigskip

Let $\mathbb{F}_q$ be a field with $q$ elements. A linear code $\mathcal{C}$
of length $n$ and dimension $k$, called an $[n,k]_q$-code, is a $k$-dimensional subspace of $\mathbb{F}_q^n$.
Moreover, an $[n,k]_q$-code $\mathcal{C}$ with minimum Hamming distance $d:=d(\mathcal{C})$ is denoted
as an $[n,k,d]_q$-code. A fundamental problem in coding theory is to optimize one of the parameters
$n, k$ or $d$ of a linear code, given the other two.
If $d_q(n,k)$ denotes the largest value of $d$ for which an
$[n,k,d]_q$-code exists, we call an $[n,k,d_q(n,k)]_q$-code simply a code with best known linear code (BKLC) parameters.
It is well known that a large number of new linear codes achieving the best known bounds
$d_q(n,k)$, in particular over
small fields, have been constructed as cyclic, constacyclic or quasi-cyclic codes,
including the corresponding ones in the non-commutative case (e.g., see \cite{AGAS}, \cite{B0} and \cite{M2}).
In this paper, by extending some results of \cite{M1} to the non-commutative case,
we study some algebraic and geometric
properties of skew constacyclic codes (Definition \ref{skew GCC}) and their duals, the latest ones being strongly related
to the minimum Hamming distance $d$ which is useful in error-correcting codes and for some decoding algorithms.
After then, as applications of the previous results,
we give some immediate consequences on $1$-generator skew quasi-twisted codes and on Maximum Distance Separable (MDS) skew constacyclic codes. 
Finally, new constructions of some linear codes, $1$-generator skew quasi-cyclic codes 
and some MDS skew constacyclic codes with the best known parameters for small values of $q$
(Tables \ref{arco:generalizadoinferior1}, \ref{Table 2} and \ref{Table 3}) are obtained in the same spirit of \cite{AGAS} and \cite{M3} 
by applying the previous theoretical results and computer programs written in MAGMA \cite{magma}.

\smallskip

The paper is organized as follows. After some basic notions and remarks, in Section \ref{Sec2} we recall the main properties of
skew constacyclic codes and we reprove in an easy way some results which will be useful in the next sections.
As a consequence of these results, in Section \ref{Sec2bis} we show some geometric properties about
generator and parity check matrices of skew constacyclic codes (Theorems \ref{GMaruta:2} and \ref{GMaruta:3}),
whose columns are composed of orbits of points in projective spaces via the action of semi-linear maps, in
line with a work of T. Maruta \cite{M1}
in the commutative case.
As an application of these facts, inspired by the works \cite{B}, \cite{AGAS} and \cite{M2} respectively,
in Section \ref{Sec4} we study the main properties of $1$-generator skew quasi-twisted codes and we show a method for lengthening
skew constacyclic codes of small length to construct in an easy way new examples of linear codes over small fields that meet
the parameters of some best known linear codes (Table \ref{arco:generalizadoinferior1}). Moreover,
we extend to the non-commutative case
the main result in \cite{M2} about $1$-generator quasi-cyclic codes (Theorem \ref{GMaruta:QT}) and
as search results, we give also some new examples of
$1$-generator skew quasi-cyclic codes with the best known distance $d$ for $q=4$ and $k=5$ (Table \ref{Table 2}). 
Finally, in Section \ref{MDS}
the existence and the construction of some MDS skew constacyclic codes (Table \ref{Table 3}) are completely characterized 
by simple algebraic conditions (Theorem \ref{teo:cuaderno2} and Corollary \ref{existence cor}).

\medskip

\noindent {\bf Note}. We add to this revised version, for completeness, an Appendix relative to some bounds for the Hamming distance of some skew codes, in particular for skew constacyclic codes. This material is an extended version of $\S 2$ of the previous version of this paper and this topic, along with some related arguments and consequences, will be the main object of an upcoming future note.

\medskip

\noindent {\bf Acknowledgements}. Both authors would like to express their gratitude to
the referees for their careful reading and for many useful suggestions
which shortened the proof and improved the presentation of some results of the paper in an earlier version. 
Moreover, during the preparation of this paper in the framework of the Project Anillo ACT 1415
PIA CONICYT, the authors were partially supported by Proyecto VRID N. 214.013.039-
1.OIN, and the first author was also partially supported by CONICYT-PCHA/Mag\'ister
Nacional a\~no 2013 - Folio: 221320380.

\section{Notation and background material}\label{Sec2}

Along all this paper, we will use the following notation.

\smallskip

\noindent Let $\F_q$ be a finite field with $q$ elements, where
$q=p^r$ for some prime  $p$ and $r\in\mathbb{Z}_{\geq 1}$. Define $\F_q^\ast:=\F_q\setminus\{ 0\}$
and take $\alpha\in\F_q^\ast$. Let $\theta$ be an automorphism of
$\F_q$, that is, $\theta (z):=z^{p^t}$ for any $z\in\mathbb{F}_q$, where $t$ is an integer such that $1\leq t\leq r$.

Let us recall here the definition of the main codes
we will treat in the next sections.

\begin{defi}[\cite{B0},\cite{B1},\cite{BL}]\label{skew GCC}
A linear code $\mathscr{C}\subseteq \mathbb{F}_q^n$ is called a
\textit{skew $(\alpha ,\theta )$-cyclic code
if $\mathscr{C}$ is invariant
under the semi-linear map}
\[ \phi_{\alpha,\theta} : (c_0,c_1\dots,c_{n-1})\mapsto (\alpha \theta(c_{n-1}),\theta(c_0),\dots,\theta(c_{n-2})). \]
Moreover, a skew $(\alpha ,id )$-cyclic code is called simply an \textit{$\alpha$-constacyclic code} and, for some fixed $\theta$,
we will call a linear code $\mathscr{C}\subseteq \mathbb{F}_q^n$ a \textit{skew constacyclic code}
if $\mathscr{C}$ is a skew $(\alpha ,\theta )$-cyclic code for some $\alpha\in \F_q^\ast$.
\end{defi}

\begin{obs}\label{obs}
According to \cite[$\S 2$]{BL}, in our situation we have $\phi_{\alpha,\theta} =
A\circ\Theta$, where
$$\Theta ((c_0,\dots,c_{n-1})):=(\theta(c_0),\dots,\theta(c_{n-1}))$$ and
$A(\vec{v}):=\vec{v}A$ \ for every $\vec{v}\in \mathbb{F}_q^n$ with
$$A:=\left( \begin{array}{ c | c c c }
0 & 1 & \cdots & 0  \\
\vdots & \vdots & \ddots & \vdots \\
0 & 0 & \cdots & 1 \\
\hline \alpha & 0 & \cdots &  0
\end{array}\right)\ .$$
\end{obs}

\noindent Let us show here some algebraic and geometric
properties of the dual codes of skew constacyclic codes.

First of all, with the purpose of giving an algebraic structure to
skew $(\alpha,\theta)$-cyclic codes, one defines a ring structure on the set
$$R:=\F_q[x;\theta]:=\{a_sx^s+...+a_1x+a_0\ | \ a_i\in\mathbb{F}_q\ \mathrm{and}\ s\in\mathbb{Z}_{\geq 0} \},$$
where the addition is defined to be the usual addition of polynomials and the multiplication is defined
by the basic rule $xa=\theta(a)x$ for any $a\in\mathbb{F}_q$, and extended to all elements of $R$ by associativity
and distributivity.

Consider now
the following one-to-one correspondence:
\begin{equation}\label{cod:pi3}
\begin{aligned}
\pi \colon \phantom{aaaaat}\F^n_q\phantom{aaaaa} &\longrightarrow \phantom{aaa} R/R(x^n-\alpha)\\
    (a_0,a_1,\dots,a_{n-1})&\longmapsto a_0+a_1x+\cdots a_{n-1}x^{n-1}\quad .\\
\end{aligned}
\end{equation}
Note that $\pi$ is an $\F_q$-linear isomorphism of left $\F_q$-modules.
So, we can identify $\F^n_q$ with $R/R(x^n-\alpha)$ and any vector
$\vec{a}=(a_0,a_1,\dots,a_{n-1})\in \F^n_q$ with the polynomial
class $\pi(\vec{a}):=\sum_{i=0}^{n-1}a_ix^i\in R/R(x^n-\alpha)$ via $\pi$.

Let $m$ be the order of $\theta$. If either $m\nmid n$ or
$\alpha\notin\F_q^\theta:=\{a\in\F_q : \theta(a)=a\}$,
we know that $R/R(x^n-\alpha)$ is not a
ring and we can not argue about its ideals, as in the commutative
case (e.g., see \cite[$\S 2.2$]{JLU}). On the other hand, when $\alpha =1$ and $m\mid n$, $R/R(x^n-\alpha)$
becomes a ring and one can construct a one-to-one
correspondence between skew cyclic codes (skew $(1,\theta)$-cyclic codes) and the ideals of
$R/R(x^n-1)$ (see, e.g., \cite{B0} and \cite{B2} for many results on this topic).
Anyway, without any condition on $m$ and $\alpha$, the set $R/R(x^n-\alpha)$ can
be always considered as a left $\F_q$-module, or a left
$R$-module.

The next two results give an equivalent definition of skew constacyclic codes and some of their
well-known properties, respectively.

\begin{thm}[see, e.g., \cite{B2} and \cite{TT1}]\label{R-module}
A nonempty subset $\mathscr{C}\subset\F^n_q$ is a skew
$(\alpha,\theta)$-cyclic code if and only if $\pi(\mathscr{C})$ is a left
$R$-submodule of the left $R$-module $R/R(x^n-\alpha)$.
\end{thm}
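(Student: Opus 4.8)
The plan is to translate the left $R$-module structure on $R/R(x^n-\alpha)$ into an explicit action on $\F_q^n$ through $\pi$, the only nontrivial ingredient being the action of the variable $x$. The guiding observation is that, since $\pi$ is an $\F_q$-linear isomorphism and $R=\F_q[x;\theta]$ is generated as a ring by $\F_q$ together with $x$, an $\F_q$-subspace $\pi(\mathscr{C})$ is a left $R$-submodule \emph{if and only if} it is additively closed, stable under left scalar multiplication by $\F_q$, and stable under left multiplication by $x$. I would first record that the restriction of the left $R$-action to $\F_q\subseteq R$ is nothing but the usual scalar multiplication of $\F_q^n$ transported by $\pi$ (left-multiplying $\sum c_ix^i$ by $a\in\F_q$ gives $\sum ac_i\,x^i$, i.e.\ $a\vec{c}$), so that ``$\pi(\mathscr{C})$ is an $\F_q$-subspace'' is exactly the condition ``$\mathscr{C}$ is a linear code''.

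The heart of the argument is the computation of $x\cdot\pi(\vec{c})$. Writing $\vec{c}=(c_0,\dots,c_{n-1})$ and $\pi(\vec{c})=\sum_{i=0}^{n-1}c_ix^i$, and using the defining skew relation $xa=\theta(a)x$ in $R$, one gets
\[ x\cdot\pi(\vec{c})=\sum_{i=0}^{n-1}\theta(c_i)\,x^{i+1}=\theta(c_{n-1})\,x^{n}+\sum_{i=0}^{n-2}\theta(c_i)\,x^{i+1}. \]
Reducing modulo the \emph{left} ideal $R(x^n-\alpha)$, and noting that $\theta(c_{n-1})\,(x^n-\alpha)\in R(x^n-\alpha)$ forces $\theta(c_{n-1})\,x^n\equiv\theta(c_{n-1})\,\alpha=\alpha\,\theta(c_{n-1})$ (the scalars commute in $\F_q$), I obtain
\[ x\cdot\pi(\vec{c})\equiv \alpha\,\theta(c_{n-1})+\theta(c_0)\,x+\cdots+\theta(c_{n-2})\,x^{n-1}=\pi\bigl(\phi_{\alpha,\theta}(\vec{c})\bigr). \]
Thus, under $\pi$, left multiplication by $x$ is \emph{exactly} the semi-linear map $\phi_{\alpha,\theta}$ of Definition \ref{skew GCC}.

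Granting this dictionary, both implications are immediate. For $(\Rightarrow)$, assume $\mathscr{C}$ is skew $\alpha$-cyclic; then it is an $\F_q$-subspace, so $\pi(\mathscr{C})$ is additively and $\F_q$-scalar closed, and since $\mathscr{C}$ is $\phi_{\alpha,\theta}$-invariant the displayed identity shows $\pi(\mathscr{C})$ is stable under left multiplication by $x$; iterating gives stability under each $x^i$, and combining with $\F_q$-linearity and additivity yields closure under left multiplication by an arbitrary $\sum a_ix^i\in R$, so $\pi(\mathscr{C})$ is a left $R$-submodule. For $(\Leftarrow)$, a left $R$-submodule is in particular closed under $\F_q$-scalars and addition (hence $\mathscr{C}$ is linear) and under multiplication by $x$, which by the identity means $\phi_{\alpha,\theta}(\mathscr{C})\subseteq\mathscr{C}$; nonemptiness is automatic since $0$ lies in both. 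The step I expect to require the most care is the reduction $\theta(c_{n-1})x^n\equiv\alpha\,\theta(c_{n-1})$: one must consistently work in the \emph{left} module $R/R(x^n-\alpha)$ and keep the scalar on the correct side, since $R$ is noncommutative and only the coincidental commutativity of $\alpha$ with $\theta(c_{n-1})$ inside $\F_q$ makes the output match $\phi_{\alpha,\theta}$.
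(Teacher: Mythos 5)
Your proof is correct, and there is nothing in the paper to compare it against: the paper states this theorem without proof, delegating it to \cite{B2} and \cite{TT1}, and your argument is exactly the standard one given in those sources — under $\pi$, left multiplication by $x$ in $R/R(x^n-\alpha)$ is the semi-linear map $\phi_{\alpha,\theta}$, and closure under $\F_q$-scalars and $x$ generates closure under all of $R$. You also handle the two genuinely delicate points correctly: the reduction $\theta(c_{n-1})x^n\equiv\theta(c_{n-1})\alpha=\alpha\,\theta(c_{n-1})$ is justified because $R(x^n-\alpha)$ is a \emph{left} ideal (so the quotient is a left $R$-module even though it need not be a ring), and in the converse direction you recover the linearity of $\mathscr{C}$ from the restriction of the $R$-action to $\F_q$, which is needed since the statement only assumes $\mathscr{C}$ nonempty.
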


\begin{thm}[see, e.g., \cite{B0}, \cite{B1} and \cite{BL}]\label{GMaruta:1}
Let $\pi(\mathscr{C})$ be a left $R$-submodule of $R/R(x^n-\alpha)$,
i.e. $\mathscr{C}$ is a skew $(\alpha,\theta)$-cyclic code of
$\F_q^n$. Then there exists a unique monic polynomial $g(x)$ of minimal
degree in $R$, called the generator
polynomial of $\mathscr{C}$, such that
\begin{enumerate}
\item[$(a)$] $g(x)$ is a right divisor of $x^n-\alpha$;
\item[$(b)$] $\pi (\mathscr{C})=Rg(x)/R(x^n-\alpha)=:Rg(x)$;
\item[$(c)$] every $c(x)\in\pi (\mathscr{C})$ can be written uniquely as
$c(x)=f(x)g(x)\in R/R(x^n-\alpha)$, where $f(x)\in R$ has degree
less than or equal to $n-\text{deg }g(x)$. Moreover, the dimension
of $\mathscr{C}$ is equal to $n-\text{deg }g(x)$;
\item[$(d)$] if $g(x):=\sum\limits_{x=0}^k g_ix^i$, then $\mathscr{C}$
has a generator matrix $G$ given by
\begin{align*} G =
\begin{pmatrix}
g_0     &  g_1  &  \cdots  &  g_{k}  & 0          & 0         & \cdots  & 0  \\
0         &  \theta(g_0)     &  \theta(g_1)     &  \cdots     & \theta(g_{k}) & 0         &  \cdots & 0  \\
\vdots   &  \ddots &  \ddots  &  \ddots    & \ddots    & \ddots &  \ddots  & \vdots \\
0         &  0        &   0         &  \theta^{n-k-1}(g_0)       &  \theta^{n-k-1}(g_1)      &   \cdots & \cdots  & \theta^{n-k-1}(g_{k}) \\
\end{pmatrix}. \end{align*}
\end{enumerate}
\end{thm}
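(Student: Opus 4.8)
The plan is to use that $R=\F_q[x;\theta]$ is a principal left ideal domain, a consequence of the right-division algorithm. Since $\theta$ is an automorphism, the leading coefficient of any nonzero $g$ remains invertible under every power of $\theta$, so for each $f\in R$ and each nonzero $g\in R$ there are unique $q,r\in R$ with $f=qg+r$ and $\deg r<\deg g$. I would record this at the outset, since every subsequent step rests on it.

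For (a), I would pull $\pi(\mathscr{C})$ back along the canonical projection $\rho\colon R\to R/R(x^n-\alpha)$ to obtain a left ideal $\mathcal{I}:=\rho^{-1}(\pi(\mathscr{C}))$ of $R$ containing $R(x^n-\alpha)$. Choosing $g(x)\in\mathcal{I}$ monic of minimal degree and right-dividing an arbitrary element of $\mathcal{I}$ by $g(x)$, the remainder again lies in $\mathcal{I}$ but has smaller degree, hence vanishes; thus $\mathcal{I}=Rg(x)$ and $\pi(\mathscr{C})=Rg(x)$ in the quotient. Uniqueness of $g$ is a degree comparison: two monic minimal-degree generators are left multiples of one another, so they differ by a unit of $R$, i.e.\ a nonzero scalar, which the monic normalization fixes to be $1$. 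Part (b) is then immediate, because $x^n-\alpha\in R(x^n-\alpha)\subseteq\mathcal{I}=Rg(x)$ provides $h(x)\in R$ with $x^n-\alpha=h(x)g(x)$, exhibiting $g$ as a right divisor.

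For (c), write $k:=\deg g$ and lift any $c(x)\in\pi(\mathscr{C})$ to some $f(x)g(x)\in Rg(x)$. Reducing $f(x)g(x)$ modulo $x^n-\alpha=h(x)g(x)$ by right division, the remainder has degree $<n$ and, being a left multiple of $g$, equals $\tilde f(x)g(x)$ with $\deg\tilde f<n-k$; this yields the representation and its uniqueness, since any two such representatives differ by an element of $R(x^n-\alpha)=Rh(x)g(x)$, which the degree bound $\deg\tilde f<n-k=\deg h$ forces to be $0$. Counting the classes $g(x),xg(x),\dots,x^{n-k-1}g(x)$, which are $\F_q$-independent because $x^ig(x)$ has degree exactly $k+i<n$ (using $\theta^i(g_k)\neq0$), then gives $\dim\mathscr{C}=n-k$.

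Finally, (d) is a direct coordinate computation from the commutation rule $xa=\theta(a)x$, which yields $x^ig(x)=\sum_{j=0}^{k}\theta^i(g_j)\,x^{i+j}$; hence the coordinate vector of $x^ig(x)$ is $(g_0,\dots,g_k)$ shifted $i$ positions to the right with $\theta^i$ applied entrywise, and no reduction modulo $x^n-\alpha$ is needed because the last row $x^{n-k-1}g(x)$ still has degree $n-1$. Stacking these rows reproduces the matrix $G$. I expect the only genuine obstacle to be bookkeeping: keeping the left/right conventions consistent, so that right division delivers principality of \emph{left} ideals and $g$ emerges as a \emph{right} divisor of $x^n-\alpha$ rather than a left one.
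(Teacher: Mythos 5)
Your proof is correct, and it is essentially the canonical argument: the paper itself states this theorem without proof, deferring to \cite{B0} and \cite{B1}, and the route you take (right division with unique quotient and remainder making $R=\F_q[x;\theta]$ a left Euclidean domain, pulling $\pi(\mathscr{C})$ back to a left ideal containing $R(x^n-\alpha)$, taking the monic minimal-degree generator, degree counting for the representation $c(x)=\tilde f(x)g(x)$ with $\deg \tilde f<n-k$, and the commutation rule $xa=\theta(a)x$ for the matrix in (d)) is precisely the one in those references. As a minor point in your favor, your bound $\deg \tilde f<n-\deg g$ is the correct one, whereas the statement's ``less than or equal to $n-\deg g(x)$'' is off by one.
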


\noindent For skew $(\alpha,\theta)$-cyclic codes of
$\mathbb{F}_q^n$ with $\theta (z):=z^{p^t}$, where $q=p^r$ for some prime $p$, $r\in\mathbb{Z}_{\geq 2}$
and an integer $t$ such that $1\leq t\leq r-1$, type the following MAGMA Program to construct by the command SD$(n,a)$ the generator polynomials of all skew $(a,\theta)$-cyclic codes of
$\mathbb{F}_q^n$:

\begin{prog}\label{Prog-1}~
\begin{verbatim}
p:=...; r:=...; t:=...; F<w>:=GF(p^r);
R<X>:=TwistedPolynomials(F:q:=p^t);
\end{verbatim}

\begin{verbatim}
SD:=function(n,a)
P:=[0]; for i in [1..n-2] do
P:=P cat [0]; end for; T:= [-a] cat P cat [1]; f:=R!T;
V:=VectorSpace(F,n); dd:=[]; E:=[x : x in F | x ne 0];
S:=CartesianProduct(E,CartesianPower(F,n-1));
for s in S do
ll:=[s[1]] cat [p : p in s[2]];
if LeadingCoefficient(R!ll) eq 1 then
q,r:=Quotrem(f,R!ll); if r eq R![0] then
R!ll; dd := dd cat [R!ll]; end if; end if; end for;
return dd; end function;
\end{verbatim}
\end{prog}

\begin{obs}
In \cite[$\S 2$]{L}, A. Leroy gives an algorithm to find right divisors of $x^n-\alpha\in\mathbb{F}_q[x;\theta]$ by showing that factorizations of polynomials in $\mathbb{F}_q[x;\theta']$, where $\theta'$ is the Frobenius automorphism, are translated into factorizations in the usual polynomial ring $\mathbb{F}_q[x]$. 
For further MAGMA Programs about the research of right divisors of $x^n-\alpha$ and the construction of skew $(\alpha,\theta)$-cyclic codes, we refer also to \cite{TT1} and \cite{TT2}.
\end{obs}

\begin{ex}
In $\mathbb{F}_4^{14}$ with $\theta (z):=z^{2}$ for any $z\in\mathbb{F}_4$, by using the command SD(14,1) of Program 1 
we can see that there are $603$ different nontrivial right divisors of $x^{14}-1$, i.e. $603$
different nontrivial skew $(1,\theta)$-cyclic codes of $\mathbb{F}_4^{14}$ instead of
$25$ different nontrivial cyclic codes of $\mathbb{F}_4^{14}$ in the commutative case.
\end{ex}

The following results deal with
the (Euclidean) dual code $\mathscr{C}^\perp$ of a code $\mathscr{C}\subseteq \F_q^n$, i.e. the set of words which
are orthogonal to the code's words relatively to the Euclidean scalar product. In particular, the next theorem
was first presented and proven in very different forms
in \cite[$\S 4$]{BSU}, \cite[Theorem 8]{B2}, \cite[Theorem 1]{B1} and \cite[Theorem 6.1]{FG}, and here we
give a very short and simple proof of it.

\begin{thm}\label{GMaruta:1e}
Let $\mathscr{C}\subseteq\F_q^n$ be a linear code and take
$\alpha\in\F_q^\ast$. Then

$\mathscr{C}$ is a skew $(\alpha,\theta )$-cyclic code $\iff \mathscr{C}^\perp$ is a skew
$(\alpha^{-1},\theta )$-cyclic code.
\end{thm}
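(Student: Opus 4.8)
The plan is to reduce everything to Lemma \ref{Tironi} together with one explicit matrix identification. Recall from Proposition \ref{boucher:A} that $\mathscr{C}$ is skew $\alpha$-cyclic precisely when $\mathscr{C}\star\T=\mathscr{C}$ for the semi-linear map $\T=\Theta\circ A$ attached to $\alpha$. Writing $A_\beta$ for the matrix of Proposition \ref{boucher:A} with $\beta$ in its lower-left corner and abbreviating $\T_\beta:=\Theta\circ A_\beta$, a code is skew $\beta$-cyclic if and only if it is invariant under $\T_\beta$. Since $\alpha\in\F_q^\ast$, the map $\T=\T_\alpha$ is invertible, so Lemma \ref{Tironi}(b) applies and yields that $\mathscr{C}^\perp$ is a linear code invariant under $\T'=\Theta^{-1}\circ({}^t\!A)_{\theta^{-1}}$, with $\mathscr{C}^\perp\star\T'=\mathscr{C}^\perp\star(\T')^{-1}=\mathscr{C}^\perp$.

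The heart of the argument is to recognize $\T'$ as the map attached to $\alpha^{-1}$. First I would record that $\T_{\alpha^{-1}}$ sends $(d_0,\dots,d_{n-1})$ to $(\alpha^{-1}\theta(d_{n-1}),\theta(d_0),\dots,\theta(d_{n-2}))$, an invertible $\theta$-semi-linear map. A direct computation of its inverse shows that $(\T_{\alpha^{-1}})^{-1}$ is the $\theta^{-1}$-semi-linear map $\vec{c}\mapsto(\vec{c})^{\theta^{-1}}(A_{\alpha^{-1}}^{-1})_{\theta^{-1}}$, where $A_{\alpha^{-1}}^{-1}$ realizes the left cyclic shift $(e_0,\dots,e_{n-1})\mapsto(e_1,\dots,e_{n-1},\alpha e_0)$; that is, $A_{\alpha^{-1}}^{-1}$ carries $1$'s just below the diagonal and $\alpha$ in the upper-right corner. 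On the other hand ${}^t\!A$, the transpose of $A=A_\alpha$, sends the entry $\alpha$ to exactly that upper-right corner and the $1$'s below the diagonal, so ${}^t\!A=A_{\alpha^{-1}}^{-1}$ as matrices over $\F_q$. Applying $\theta^{-1}$ entrywise to both sides then gives the key identity $({}^t\!A)_{\theta^{-1}}=(A_{\alpha^{-1}}^{-1})_{\theta^{-1}}$, and since both maps are $\theta^{-1}$-semi-linear with the same matrix I conclude $\T'=(\T_{\alpha^{-1}})^{-1}$.

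With this identification the forward implication is immediate: $\mathscr{C}^\perp$ is invariant under $\T'=(\T_{\alpha^{-1}})^{-1}$, hence by Lemma \ref{Tironi}(a) it is invariant under $\T_{\alpha^{-1}}$ as well, which is exactly the statement that $\mathscr{C}^\perp$ is skew $\alpha^{-1}$-cyclic. For the converse I would simply apply what has already been proved to the code $\mathscr{C}^\perp$ with $\alpha^{-1}$ playing the role of $\alpha$: if $\mathscr{C}^\perp$ is skew $\alpha^{-1}$-cyclic, then its dual $(\mathscr{C}^\perp)^\perp$ is skew $(\alpha^{-1})^{-1}=\alpha$-cyclic, and since $(\mathscr{C}^\perp)^\perp=\mathscr{C}$ for linear codes over a finite field this is precisely the desired conclusion.

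I expect the only delicate point to be the bookkeeping in the matrix identification of the second paragraph: one must respect the row-vector and right-action convention used throughout, and track carefully that the single nontrivial off-band entry of $A$ produces $\theta^{-1}(\alpha)$ in the upper-right corner after transposition and the entrywise application of $\theta^{-1}$, so that it genuinely matches the corresponding entry of $(A_{\alpha^{-1}}^{-1})_{\theta^{-1}}$ rather than a spurious $\alpha^{-1}$ or $\theta(\alpha^{-1})$. Everything else—invertibility from $\alpha\neq 0$, the passage between a semi-linear map and its inverse supplied by Lemma \ref{Tironi}(a), and the double-dual identity—is routine once Lemma \ref{Tironi} is granted.
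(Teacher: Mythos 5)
Your proposal is correct and takes essentially the same route as the paper's own proof: both invoke Lemma \ref{Tironi}(b) to get invariance of $\mathscr{C}^\perp$ under $\T'=\Theta^{-1}\circ({}^t\!A)_{\theta^{-1}}$, use the commutation $\Theta\circ M_\theta = M\circ\Theta$ together with Lemma \ref{Tironi}(a) to trade $\T'$ for the map $\Theta\circ({}^t\!A)^{-1}$ attached to $\alpha^{-1}$, and settle the converse via $(\mathscr{C}^\perp)^\perp=\mathscr{C}$; your key identity ${}^t\!A=A_{\alpha^{-1}}^{-1}$ is simply the paper's observation that $({}^t\!A)^{-1}$ is the shift matrix of $\alpha^{-1}$, read in the opposite direction, and the entrywise-$\theta^{-1}$ bookkeeping you flag does check out.
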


\begin{proof}
Assume that $\mathscr{C}$ is a skew $(\alpha,\theta )$-cyclic code.
By \cite[Proposition 25]{TT1} and Remark \ref{obs} we deduce that $\mathscr{C}^\perp$ is
invariant under the semi-linear map $\T : \F_q^n \to \F_q^n$ given by
$\T (c_0,...,c_{n-1}):=(\theta^{-1}(c_1),...,\theta^{-1}(c_{n-1}), \theta^{-1}(\alpha)\theta^{-1}(c_{0})),$
i.e. $\T\mathscr{C}^\perp=\mathscr{C}^\perp$.
Since $\T$ is invertible, then we have also $\T^{-1}\mathscr{C}^\perp=\mathscr{C}^\perp$.
Note that $$\T^{-1}(d_0,...,d_{n-1})=(\alpha^{-1}\theta (d_{n-1}),\theta (d_{0}),...,\theta(d_{n-2}))\ .$$
Thus $\T^{-1}=\phi_{\alpha^{-1},\theta}$ and by Definition \ref{skew GCC} we conclude that
$\mathscr{C}^\perp$ is a skew $(\alpha^{-1},\theta )$-cyclic code.
Finally, having in mind that $(\mathscr{C}^\perp)^\perp
=\mathscr{C}$, the converse is
immediate. \end{proof}

\begin{cor}[see also $\S\S 5,6$ of \cite{BSU}, Proposition 13 of \cite{B2} and Proposition 1 of \cite{B3}]
Let $\C$ be a skew $(\alpha,\theta)$-cyclic code. If $\{\vec{0}\}\neq\mathscr{C} \subseteq \mathscr{C}^\perp$ $($or $\{\vec{0}\}\neq\mathscr{C}^\perp \subseteq \mathscr{C})$, then
$\alpha=\pm 1$ and $\dim\mathscr{C}\leq\frac{n}{2}\leq\dim\mathscr{C}^\perp$ $($or $\dim\mathscr{C}^\perp\leq\frac{n}{2}\leq\dim\mathscr{C})$.
In particular, if $\mathscr{C}=\mathscr{C}^\perp$ then $\alpha=\pm 1$, $n$ is even and $\dim\mathscr{C}=\frac{n}{2}$.
\end{cor}
\begin{proof}
Since $\mathscr{C}=(\mathscr{C}^\perp)^\perp$, without loss of generality, we can assume that $\mathscr{C} \subseteq \mathscr{C}^\perp$.
First of all, observe that $n=\dim\mathscr{C}+\dim\mathscr{C}^{\perp}$. Hence $n\geq 2\dim\mathscr{C}$ and $n\leq 2\dim\mathscr{C}^\perp$. Finally,
by Theorems \ref{GMaruta:1} and \ref{GMaruta:1e}, let $g(x),h(x)$ be the generator polynomials of $\mathscr{C}$ and $\mathscr{C}^\perp$,
respectively. Thus
for some $h_1(x),h_2(x)\in R$ we have
$$h_1(x)h(x)=x^n-\alpha^{-1}=x^n-\alpha+(\alpha-\alpha^{-1})=h_2(x)g(x)+(\alpha-\alpha^{-1})\ .$$
Since $\mathscr{C} \subseteq \mathscr{C}^\perp$, we see also that there exists $q(x)\in R$ such that $g(x)=q(x)h(x)$.
Thus $(h_1(x)-h_2(x)q(x))h(x)=\alpha-\alpha^{-1}\in\mathbb{F}_q$. If $\alpha-\alpha^{-1}\neq 0$ then $\mathscr{C}^\perp=\mathbb{F}_q^n$,
but this gives the contradiction $\mathscr{C}=(\mathscr{C}^\perp)^\perp=(\mathbb{F}_q^n)^\perp=\{\vec{0}\}$.
So, we get $\alpha = \alpha^{-1}$ and this implies $\alpha^2=1$, i.e. $\alpha=\pm 1$.
\end{proof}

With the next result (see also \cite[Theorem 8]{B2} and \cite[Theorem 6.1]{FG}), one can write directly the generator polynomial of the dual code of any skew constacyclic code.

\begin{prop}\label{asterisco}
Let $\C$ be a skew $(\alpha,\theta)$-cyclic code generated by $g(x)=g_0+g_1x+\cdots
+g_{n-k-1}x^{n-k-1}+x^{n-k}$ with $\deg g(x):=n-k>0$. Then the dual code $\mathscr{C}^\perp$ is generated by $$h(x):=\theta^k(\hbar_0^{-1})\left[\sum_{i=1}^k \theta^i\left( \hbar_{k-i}\right)x^i+1\right]\ ,$$ where $\hbar(x)=\hbar_0+\cdots+\hbar_{k-1}x^{k-1}+x^k$ is such that $x^n-\theta^{-k}(\alpha)=g(x)\hbar(x).$
\end{prop}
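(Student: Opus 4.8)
My plan is to prove the two inclusions simultaneously, via a dimension count combined with one orthogonality computation, after first making sense of the auxiliary polynomial $\hbar$. Throughout I use that, by Theorem \ref{GMaruta:1}, $g$ is monic of degree $n-k$, so that $\dim\mathscr{C}=n-\deg g=k$, and that by Theorem \ref{GMaruta:1e} the dual $\mathscr{C}^\perp$ is itself a skew $\alpha^{-1}$-cyclic code; hence $\mathscr{C}^\perp$ has a unique monic generator polynomial $\tilde h$ with $\deg\tilde h=n-\dim\mathscr{C}^\perp=k$, right-dividing $x^n-\alpha^{-1}$. The first step is to justify that the polynomial $\hbar$ in the statement exists, i.e.\ that $g$ is a (monic) \emph{left} divisor of $x^n-\theta^{-k}(\alpha)$. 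I would obtain this by left-dividing $x^n$ by the monic polynomial $g$: this yields a monic quotient $\hbar$ of degree $k$ together with a remainder of degree $<n-k$, and by comparing with the right factorization $x^n-\alpha=qg$ (where $q$ is monic of degree $k$) one forces that remainder to be exactly the constant $\theta^{-k}(\alpha)$. This both defines $\hbar$ and explains the twist $\theta^{-k}$.

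Next I would record two elementary facts about $h(x)=\theta^k(\hbar_0^{-1})\,\hbar^{*}(x)$, where $\hbar^{*}(x):=\sum_{i=0}^k\theta^i(\hbar_{k-i})x^i$ is the skew-reciprocal of $\hbar$. Its leading coefficient is $\theta^k(\hbar_0^{-1})\theta^k(\hbar_0)=\theta^k(\hbar_0^{-1}\hbar_0)=1$, so $h$ is monic of degree $k$. Then comes the heart of the argument: the orthogonality of $h$ against $\mathscr{C}$. Writing the rows of the generator matrix of $\mathscr{C}$ as the coefficient vectors of $x^jg$ (with $j=0,\dots,k-1$, since $\dim\mathscr{C}=k$), whose entry in position $t$ is $\theta^j(g_{t-j})$, and expanding the Euclidean pairing against the coefficient vector of $x^sh$, a short manipulation that factors out $\theta^j$ and reindexes by $t=j+\ell$ gives the key identity
\[
\big\langle\, x^s h,\; x^j g\,\big\rangle
=\theta^k(\hbar_0^{-1})\,\theta^j\!\Big(\text{coefficient of }x^{\,k+s-j}\text{ in }g(x)\hbar(x)\Big).
\]
Since $g(x)\hbar(x)=x^n-\theta^{-k}(\alpha)$ has nonzero coefficients only in degrees $0$ and $n$, and the index $k+s-j$ runs over $\{1,\dots,n-1\}$ as $j\in\{0,\dots,k-1\}$ and $s\in\{0,\dots,n-k-1\}$, every one of these pairings vanishes.

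Finally I would conclude. The vectors $x^sh$ for $s=0,\dots,n-k-1$ have distinct leading terms $x^{k},\dots,x^{n-1}$, hence are linearly independent; by the orthogonality just established they lie in $\mathscr{C}^\perp$, which has dimension $n-k$, so they span it. Equivalently, and more economically, taking $s=0$ shows the single monic degree-$k$ vector $h$ belongs to $\pi(\mathscr{C}^\perp)=R\tilde h$; since the only monic degree-$k$ element of $R\tilde h$ is $\tilde h$ itself, we get $h=\tilde h$, so $h$ is indeed the generator polynomial of $\mathscr{C}^\perp$. I expect the main obstacle to be the first step, namely establishing the left–right divisor correspondence and pinning the twist to $\theta^{-k}(\alpha)$ rather than some other power; the orthogonality identity itself is then routine provided one is careful tracking the various powers of $\theta$ coming from the semilinearity of the shift.
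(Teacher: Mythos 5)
Your argument is correct in substance but takes a genuinely different route from the paper. The paper's proof is essentially an assembly of citations: after noting via Theorem \ref{GMaruta:1e} that $\mathscr{C}^\perp$ is skew $\alpha^{-1}$-cyclic and hence, by Theorem \ref{GMaruta:1}, has a unique monic generator, it imports \cite[Theorem 8]{B2} wholesale to obtain both a factorization $x^n-c=g(x)\hbar(x)$ for \emph{some} constant $c\in\F_q^\ast$ and the formula $h(x)=\theta^k(\hbar_0^{-1})\sum_{i=0}^k\theta^i(\hbar_{k-i})x^i$, and then pins down the constant by citing \cite[Lemma 2]{B1} (which gives $x^n-\theta^{-k}(\alpha)=g(x)s(x)$) and subtracting, so that $c-\theta^{-k}(\alpha)=g(x)\bigl(s(x)-\hbar(x)\bigr)$ with $\deg g\geq 1$ forces $c=\theta^{-k}(\alpha)$. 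You instead verify the formula from scratch: monicity of $h$ (it is worth one clause that $\hbar_0\neq 0$, since $g_0\hbar_0=-\theta^{-k}(\alpha)\neq 0$), the pairing identity reducing $\langle x^sh,x^jg\rangle$ to a coefficient of $g(x)\hbar(x)$, vanishing because that product is supported in degrees $0$ and $n$ while $k+s-j$ ranges over $\{1,\dots,n-1\}$, and the dimension count together with uniqueness of the monic degree-$k$ generator. I checked your identity; it is correct up to the prefactor, which should be $\theta^{s+k}(\hbar_0^{-1})$ rather than $\theta^k(\hbar_0^{-1})$ because the scalar is twisted when pushed past $x^s$ --- immaterial, as it is nonzero either way. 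What your route buys is a self-contained verification independent of \cite[Theorem 8]{B2}; what the paper's route buys is brevity.

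The one genuine soft spot is your first step, and you correctly identified it as the main obstacle. The claim that left division $x^n=g(x)\hbar(x)+r(x)$, ``compared'' with the right factorization $x^n-\alpha=q(x)g(x)$, forces $r=\theta^{-k}(\alpha)$ is \emph{exactly} the content of \cite[Lemma 2]{B1}, not a consequence of degree bookkeeping: the comparison yields only $q(x)g(x)-g(x)\hbar(x)=r(x)-\alpha$ with $\deg\bigl(r(x)-\alpha\bigr)<n-k$, which neither shows that $r$ is a constant nor identifies the twist $\theta^{-k}$. Pinning this down requires an actual coefficient-level induction from the top degrees, or the standard anti-isomorphism $\F_q[x;\theta]\to\F_q[x;\theta^{-1}]$, $\sum a_ix^i\mapsto\sum\theta^{-i}(a_i)x^i$, or simply the citation. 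Since the paper itself does not prove this lemma but cites it, your proof reaches the same level of rigor as the paper's once you do likewise; but as written, ``one forces'' asserts the lemma rather than proving it.
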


\begin{proof}
From Theorem
\ref{GMaruta:1e} it follows that the dual code $\mathscr{C}^\perp$ is a skew $(\alpha^{-1},\theta)$-cyclic code. Then by Theorem \ref{GMaruta:1} we know that there exists a unique monic polynomial $h(x)$ of minimal degree in $R$ which is the generator polynomial of $\mathscr{C}^\perp$. So by \cite[Theorem 8]{B2} we see that there exist $\hbar(x)\in R$ and $c\in\mathbb{F}_q^*$ such that
$$x^n-c=g(x)\hbar(x)$$ and $h(x)=\theta^k(\hbar_0^{-1})g^\perp(x)$, where $g^\perp(x):=\sum_{i=0}^k \theta^i\left( \hbar_{k-i}\right)x^i$ and $\hbar(x)=\hbar_0+\cdots+\hbar_{k-1}x^{k-1}+x^k$. Since $g(x)\in R$ is monic and $x^n-\alpha =t(x)g(x)$ for some monic $t(x)\in R$, by \cite[Lemma 2]{B1} we obtain also that $$x^n-\theta^{-k}(\alpha)=g(x)s(x)$$ for some $s(x)\in R$. Hence
$c-\theta^{-k}(\alpha)=g(x)(s(x)-\hbar(x))$. Since $\deg g(x)>0$, this shows that $c=\theta^{-k}(\alpha)$.
\end{proof}

\section{Generator and parity check matrices}\label{Sec2bis}

We will show here a geometric property of
the dual codes of skew constacyclic codes. More precisely, we
will see first that the columns of a parity check matrix of a skew
constacyclic code can be considered as
points in a projective space which are particular orbits under the action of
a semi-linear map associated to the code. 

Denote by $GL(k,q)$ the set of $k\times k$ invertible matrices defined over $\F_q$.
Therefore, in line with
\cite{M1}, in the non-commutative case we obtain the following two results.

\begin{thm}\label{GMaruta:2}
Let $\mathscr{C}$ be a linear $[n,n-k]_q$-code. Then
$\mathscr{C}$ is a skew $(\alpha,\theta)$-cyclic code if and only if for some $\vec{u}\in\F_q^k$ and
$T\in GL(k,q)$,
$\mathscr{C}$ has a parity check matrix of the form
$$[{}^t \vec{u}, {}^t \tau (\vec{u}), {}^t \tau ^2(\vec{u}),\dots, {}^t \tau ^{n-1}(\vec{u})]$$
such that $\tau^n(\vec{u})=\alpha \vec{u}$, where $\tau (\vec{v}):=\Theta (\vec{v})T$ for every $\vec{v}\in\mathbb{F}^k_q$.
\end{thm}

\begin{proof}
Let $[{}^t \! P_1,{}^t \! P_2,\dots,{}^t \! P_n]$ be a parity check matrix of
$\mathscr{C}$, where $P_i\in\F_q^k$, and suppose that $\mathscr{C}$ is a skew $(\alpha,\theta)$-cyclic code. Then there exists 
$S\in\mathrm{GL}(k,q)$ such that $S[\alpha^{-1} {}^t \Theta(P_n),{}^t \Theta(P_1),\dots,{}^t \Theta(P_{n-1})] = [{}^t \! P_1,{}^t \! P_2,\dots,{}^t \! P_n] \ .$
This matrix equality gives

\smallskip

\noindent $P_1=\alpha^{-1}\Theta (P_n){}^t \! S \ , \ P_2=\Theta (P_1){}^t \! S \ , \ P_3=\Theta (P_2){}^t \! S = \Theta^2 (P_1){}^t \! S_\theta {}^t \! S \ , \dots $

\noindent $ P_n=\Theta (P_{n-1}){}^t \! S = \dots = \Theta^{n-1} (P_1){}^t \! S_{\theta^{n-2}} \cdots {}^t \! S_{\theta^2} {}^t \! S_\theta {}^t \! S \ ,$

\smallskip

\noindent where $M_{\theta^h}:=[\theta^h(m_{ij})]$ for every $h\in\mathbb{Z}_{\geq 0}$ and any matrix $M:=[m_{ij}]$. 
Write $\vec{u}:=P_1$ and $T:={}^t \! S$. Defining $\tau (\vec{v}):=\Theta (\vec{v})T$ for every $\vec{v}\in\mathbb{F}^k_q$, we get

\smallskip

\noindent ${}^t \! P_1={}^t \vec{u}\ ,$

\noindent ${}^t \! P_2 = {}^t \! (\Theta (P_1){}^t \! S) = {}^t \! (\Theta (\vec{u})T)={}^t \tau (\vec{u}) \ ,$

\noindent ${}^t \! P_3 = {}^t \! (\Theta^2 (\vec{u}) T_\theta T) =
{}^t \! (\Theta (\Theta (\vec{u})T)T) = {}^t \! (\Theta (\tau (\vec{u}))T) = {}^t \tau^2 (\vec{u}) \ , \  ... $

\noindent ${}^t \! P_n={}^t \! (\Theta^{n-1} (\vec{u})T_{\theta^{n-2}} \cdots T_{\theta^2} T_\theta T) = ... = {}^t \tau ^{n-1}(\vec{u})$

\noindent and $\alpha\vec{u}=\alpha P_1=\Theta (P_n){}^t \! S=\tau(P_n)=\tau^n(\vec{u})$.
Conversely, assume that $\mathscr{C}$ has a parity check matrix $H$ of the form
$$[{}^t \vec{u}, {}^t \tau (\vec{u}), {}^t \tau ^2(\vec{u}),\dots, {}^t \tau ^{n-1}(\vec{u})]=
\left( \begin{array}{ c }
\vec{h}_1  \\
\hline\vdots \\
\hline \vec{h}_k
\end{array}\right)$$ 
such that $\tau^n(\vec{u})=\alpha \vec{u}$, where $\tau (\vec{v}):=\Theta (\vec{v})T$ for every $\vec{v}\in\mathbb{F}^k_q$
and $\vec{h}_j\in\F_q^n$ for $j=1,\dots , k$. Since 
$${}^t\vec{u}=\alpha^{-1}\ {}^t\tau^n(\vec{u})=\alpha^{-1}\ {}^t\left(\Theta\left(\tau^{n-1}(\vec{u})\right)T\right)={}^tT\ \alpha^{-1}\ {}^t\Theta\left(\tau^{n-1}(\vec{u})\right)\ ,$$
$${}^t\tau(\vec{u})={}^tT\ {}^t\Theta(\vec{u})\quad \mathrm{and}\quad 
{}^t\tau^i(\vec{u})={}^tT\ {}^t\Theta\left(\tau^{i-1}(\vec{u})\right)
\ \ \mathrm{for}\ i=2,\dots , n-1\ ,$$
we have
$$H=[{}^t \vec{u}, \dots, {}^t \tau ^{n-1}(\vec{u})]={}^t T[{}^t \vec{u}, \dots, {}^t \tau ^{n-1}(\vec{u})]_\theta A={}^t T\ H_\theta A=
{}^t T\left( {}^t T_\theta\ H_{\theta^2} A_\theta\right)A=\dots $$
$$={}^t T{}^t T_\theta\dots {}^t T_{\theta^s}\ [{}^t \vec{u}, \dots, {}^t \tau ^{n-1}(\vec{u})]_{\theta^{s+1}}\ A_{\theta^s}\dots A_\theta A
={}^t T{}^t T_\theta\dots {}^t T_{\theta^s}\ H_{\theta^{s+1}}\ A_{\theta^s}\dots A_\theta A\ $$
for any $s\in\mathbb{Z}_{\geq 0}$, where $A$ is an $n\times n$ matrix as in Remark \ref{obs} with $\alpha^{-1}$ instead of $\alpha$.
Moreover, writing $\phi :=\phi_{\alpha^{-1},\theta}=A\circ\Theta$, note that
$$H= {}^t T{}^t T_\theta\dots {}^t T_{\theta^s}\ \left(H_{\theta^{s+1}}\ A_{\theta^s}\dots A_\theta A\right)
={}^t T{}^t T_\theta\dots {}^t T_{\theta^s}\
\left( \begin{array}{ c }
\phi^{s+1}\left(\vec{h}_1\right)  \\
\hline\vdots \\
\hline \phi^{s+1}\left(\vec{h}_k\right)
\end{array}\right)\ .$$
Since ${}^t T{}^t T_\theta\dots {}^t T_{\theta^s}\in\mathrm{GL}(k,q)$, we see that $\phi^{j}(\vec{h}_i)\in\mathscr{C}^\perp$
for every $i=1,...,k$ and for any $j\in\mathbb{Z}_{\geq 0}$. Hence $\mathscr{C}^\perp$ is a skew $(\alpha^{-1},\theta)$-cyclic code
and we conclude by Theorem \ref{GMaruta:1e}.
\end{proof}

 \begin{thm}\label{GMaruta:3}
Let $g(x)=a_0+a_1x+...+a_{k-1}x^{k-1}+x^k$
be a monic skew polynomial of degree $k$ in $\F_q[x;\theta]$ that divides on the right
$x^n-\alpha$, where $\alpha\in\F_q^\ast$. Then
$\mathscr{C}\subset\F_q^n$ is a skew $(\alpha,\theta)$-cyclic
$[n,n-k]_q$-code with generator polynomial $g(x)$ if and
only if $\mathscr{C}$ has a parity check matrix
$$[{}^t \! P,{}^t \! \tau (P),{}^t \! \tau^2 (P),\dots,{}^t \! \tau^{n-1}(P)],$$ where
$P=(1,0,\dots,0)\in\F_q^k$, $\tau (\vec{v}):=\Theta (\vec{v})T_{g(x)}$ for every $\vec{v}\in\mathbb{F}^k_q$
and $T_{g(x)}$ is the
companion matrix of $g(x)$, i.e.
\[ T_{g(x)}:=\left( \begin{array}{ c c c c }
0&1&\cdots &0\\
\vdots&\vdots&\ddots &\vdots\\
0&0&\cdots&1\\
-a_0&-a_1&\cdots&-a_{k-1}\\
\end{array}\right)\ . \]
\end{thm}

\begin{proof}
Consider the linear map $\pi'$ defined by
\begin{equation}\label{cod:pi4}
\begin{aligned}
\pi' \colon \phantom{aaaaat}\F^k_q\phantom{aaaaa} &\longrightarrow \phantom{aaa} R/Rg\\
    (c_0,c_1,\dots,c_{k-1})&\longmapsto c_0+c_1x+\cdots c_{k-1}x^{k-1}\ .\\
\end{aligned}
\end{equation}
Since $P=(1,0,\dots,0)$, note that
$\pi'(\tau^i (P))=x^i\pi'(P)=x^i$ for all $i\in\Z_{\geq 0}$.
Thus, we have
$$\pi'( a_0P+\cdots+a_{k-1}\tau^{k-1}(P)+\tau^k(P))=$$
$$=a_0\pi'(P)+\cdots+a_{k-1}\pi'(\tau^{k-1}(P))+\pi'(\tau^k(P))
=$$
$$=a_0 (1)+a_1(x)+\cdots+a_{k-1}(x^{k-1})+(x^k)
=g(x) = 0 \in R/Rg(x),$$
i.e.
$a_0P+a_1\tau (P)+\cdots+a_{k-1}\tau^{k-1}(P)+\tau^k(P)=(0,...,0)\in\F_q^k$.
This shows that $\pi^{-1}(g(x)):=(a_0,...,a_{k-1},1,0,...,0)$ is in the kernel $K$
of $H$, where $H$ is the matrix $[{}^t \! P,{}^t \! \tau (P),{}^t \! \tau^2 (P),\dots,{}^t \! \tau^{n-1}(P)]$
given in the statement and $\pi$ is the linear map defined by \eqref{cod:pi3}.
Moreover, as $g(x)$ is a right divisor of $x^n-\alpha$, we deduce that
$$\pi'(\tau^n(P)-\alpha P)=\pi'(\tau^n(P))-\alpha\pi'(P)=x^n-\alpha=0\in R/Rg(x),$$
and this implies that $\tau^n(P)-\alpha P=(0,...,0)\in\F_q^k$, i.e. $\tau^n(P)=\alpha P$.
By Theorem \ref{GMaruta:2} we deduce that
$K$ is a skew $(\alpha,\theta)$-cyclic code, hence by Theorem \ref{R-module}
$\pi(K)$ is a left $R$-module which contains $g(x)$. This means that $\pi^{-1}(Rg(x))\subseteq K$.
Since both spaces have dimension $n-k$, we see that $K$ is indeed the skew $(\alpha,\theta)$-cyclic code
whose generator polynomial is
$g(x)$.
\end{proof}

As an immediate consequence of Theorems \ref{GMaruta:1e}, \ref{GMaruta:2}, \ref{GMaruta:3} and
Proposition \ref{asterisco}, we have also the following characterization of a generator matrix of a skew constacyclic code.

\begin{cor}\label{cor gen}
Let $\mathscr{C}$ be a linear $[n,k]_q$-code and $\alpha\in\F_q^*$. Then
\begin{enumerate}
\item[$(a)$] $\mathscr{C}$ is a skew $(\alpha,\theta)$-cyclic code if and only if for some $\vec{a}\in\F_q^k$ and
$T\in GL(k,q)$,
$\mathscr{C}$ has a generator matrix of the form
$$[{}^t \vec{a}, {}^t \tau (\vec{a}), {}^t \tau ^2(\vec{a}),\dots, {}^t \tau ^{n-1}(\vec{a})]$$
such that $\tau^n(\vec{a})=\alpha^{-1} \vec{a}$, where $\tau (\vec{v}):=\Theta (\vec{v})T$ for every $\vec{v}\in\mathbb{F}^k_q$.
\item[$(b)$] $\mathscr{C}$ is a skew $(\alpha,\theta)$-cyclic code
with generator polynomial $g(x)=g_0+g_1x+...+g_{n-k-1}x^{n-k-1}+x^{n-k}$
if and only if $\mathscr{C}$ has a generator matrix of the form
$$[{}^t \vec{b}, {}^t \tau (\vec{b}), {}^t \tau ^2(\vec{b}),\dots, {}^t \tau ^{n-1}(\vec{b})],$$
where $\vec{b}=(1,0,...,0)$ and $\tau (\vec{v}):=\Theta (\vec{v})T_{h(x)}$ for every $\vec{v}\in\mathbb{F}^k_q$
with $h(x)$ given by {\em Proposition \ref{asterisco}}.
\end{enumerate}
\end{cor}

\begin{obs}
Comparing the two matrices in Corollary \ref{cor gen} $(a)$ and $(b)$, since they differ by an invertible matrix, it follows that
the first $k$ vectors ${}^t \vec{a}, {}^t \tau (\vec{a}), \dots, {}^t \tau ^{k-1}(\vec{a})$ generate $\mathbb{F}^k_q$ since they are linearly independent.
The same occurs for the first $k$ vectors ${}^t \vec{u}, {}^t \tau (\vec{u}), \dots, {}^t \tau ^{k-1}(\vec{u})$ of matrices as in Theorem \ref{GMaruta:2},
by comparing these matrices with those in Theorem \ref{GMaruta:3} for the same linear code.
\end{obs}

\medskip

\noindent We will give now a criterion in Proposition \ref{GMaruta:4} to determine whether two
skew $(\alpha,\theta)$-cyclic codes with some parameters are in fact
the same linear code.

\begin{defi}
A linear code $\mathscr{C}\subseteq \F_q^n$ with a parity check
matrix of the form $$[{}^t \! P, {}^t \tau(P),
{}^t\tau^2(P),\dots,{}^t \tau^{n-1}(P)]$$ with $P\in\F_q^k$,
$\tau=T\circ\Theta$, $T\in GL(k,q)$ and such that $\tau^n(P)=\alpha P$ for some $\alpha\in\mathbb{F}_q^*$ is called \textit{a code
$\mathscr{C}$ defined by $(\tau,P,n)$}. Moreover, one can define the following set:
\[ \Gamma_k^\alpha :=\{ (\tau,P,n) \mid \text{the code defined by } (\tau,P,n) \text{ is a skew $(\alpha,\theta)$-cyclic $[n,n-k]_q$-code} \}. \]
\end{defi}

\smallskip

Note that a code $\mathscr{C}$ defined
by $(\tau, P, n)$ is a skew $(\alpha,\theta)$-cyclic $[n,n-k]_q$-code if and only if the first $k$ columns of
the parity check matrix $[{}^t \! P, {}^t \tau(P),
{}^t\tau^2(P),\dots,{}^t \tau^{n-1}(P)]$ are linearly independent.
The next result shows when two codes $\mathscr{C}_i$ defined by
$(\tau_i, P_i, n)\in\Gamma_k^\alpha$ for $i=1,2$ are the same skew $(\alpha,\theta)$-cyclic code
in $\mathbb{F}_q^n$.

\begin{prop}\label{GMaruta:4}
Let $\mathscr{C}_i$ be the code defined by $(\tau_i,P_i,n)\in
\Gamma_k^\alpha$ for $i=1,2$. Then, $\mathscr{C}_1=\mathscr{C}_2$
if and only if there exists $S\in GL(k,q)$ such that
$\tau_2=\hat{S}\circ\tau_1\circ \hat{S}^{-1}$ and $\hat{S}(P_1)=P_2$, where $\hat{S}(\vec{v}):=\vec{v}S$ for every
$\vec{v}\in\mathbb{F}_q^k$.
\end{prop}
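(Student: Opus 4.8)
The plan is to read each code off its parity check matrix and reduce the assertion to a single relation between the two semi-linear orbits $\{P_i\tau_i^j\}_{j=0}^{n-1}$. Write $H_i := [(P_i)_t,(P_i\tau_i)_t,\dots,(P_i\tau_i^{n-1})_t]$ for the defining parity check matrix of $\mathscr{C}_i$. Each $H_i$ is a $k\times n$ matrix which is the parity check matrix of an $[n,n-k]_q$-code, so it has rank $k$; hence its row space is exactly $\mathscr{C}_i^\perp$ and the $n$ row vectors $P_i\tau_i^j$ span $\F_q^k$. Since a code is the orthogonal complement of the row space of any of its parity check matrices, $\mathscr{C}_1=\mathscr{C}_2$ holds if and only if $H_1$ and $H_2$ have the same (full, $k$-dimensional) row space, i.e. iff there is $M\in GL(k,q)$ with $H_1=M\cdot H_2$. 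Setting $S:=({}^t\!M)^{-1}\in GL(k,q)$, the identity I want to produce in both directions is
\[ P_1\tau_1^{\,j}=(P_2\tau_2^{\,j})\,S^{-1}\qquad (j=0,\dots,n-1), \]
whose reading at $j=0$ is exactly $P_1S=P_2$.

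For the implication ``$\Leftarrow$'', assume $\tau_1=S\tau_2S^{-1}$ and $P_1S=P_2$. As composition of these $\theta$-semi-linear maps is associative and $S^{-1}S=\mathrm{id}$, the conjugation telescopes to $\tau_1^{\,j}=S\tau_2^{\,j}S^{-1}$, whence $P_1\tau_1^{\,j}=((P_1S)\tau_2^{\,j})S^{-1}=(P_2\tau_2^{\,j})S^{-1}$. Consequently, for every $\vec c=(c_0,\dots,c_{n-1})$ one has $\sum_{j}c_j(P_1\tau_1^{\,j})=\big(\sum_{j}c_j(P_2\tau_2^{\,j})\big)S^{-1}$, and since $S^{-1}$ is invertible the parity conditions $\vec c\,(H_1)_t=\vec 0$ and $\vec c\,(H_2)_t=\vec 0$ are equivalent. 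Hence $\mathscr{C}_1=\mathscr{C}_2$.

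For ``$\Rightarrow$'', start from $H_1=M H_2$ with $M\in GL(k,q)$ as above; comparing the $j$-th columns and transposing gives $P_1\tau_1^{\,j}=(P_2\tau_2^{\,j})\,{}^t\!M=(P_2\tau_2^{\,j})S^{-1}$ for all $j$, and in particular $P_1S=P_2$. It then remains to upgrade this pointwise relation to the equality of maps $\tau_1=S\tau_2S^{-1}$. Both sides are $\theta$-semi-linear, so it suffices to check that they agree on the spanning family $\{P_1\tau_1^{\,j}\}_{j=0}^{n-1}$. For $j\le n-2$ this is immediate: applying $S\tau_2S^{-1}$ to $P_1\tau_1^{\,j}=(P_2\tau_2^{\,j})S^{-1}$ and using $S^{-1}S=\mathrm{id}$ produces $(P_2\tau_2^{\,j+1})S^{-1}=P_1\tau_1^{\,j+1}=(P_1\tau_1^{\,j})\tau_1$. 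The delicate index is $j=n-1$: the same computation yields $(P_2\tau_2^{\,n})S^{-1}$, which by the defining closure $P_2\tau_2^{\,n}=\alpha P_2$ together with $P_2S^{-1}=P_1$ equals $\alpha P_1=P_1\tau_1^{\,n}=(P_1\tau_1^{\,n-1})\tau_1$, so the two maps agree there as well.

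I expect the boundary case $j=n-1$ to be the main point to get right: it is exactly where the hypothesis $(\tau_i,P_i,n)\in\Gamma_k^\alpha$ — that both orbits close up with the \emph{same} constant $\alpha$ via $P_i\tau_i^{\,n}=\alpha P_i$ — is genuinely used, and without the common $\alpha$ the orbit relation would not force $\tau_1$ and $S\tau_2S^{-1}$ to coincide on all of $\F_q^k$. Once agreement on the full orbit is in hand, the spanning property (guaranteed by $\mathrm{rank}\,H_1=k$) and $\theta$-semi-linearity give $\tau_1=S\tau_2S^{-1}$, completing the argument. The only remaining bookkeeping is fixing the conventions for the conjugate $S\tau_2S^{-1}$ and for transposition so that the inverse of $M$ lands on the correct side, which is precisely what dictates the choice $S=({}^t\!M)^{-1}$.
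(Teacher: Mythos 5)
Your proof is correct, and its skeleton is the paper's: both directions rest on the fact that two rank-$k$ parity check matrices of the same code differ by an invertible matrix, yielding the columnwise relation $P_1\tau_1^{\,j}=(P_2\tau_2^{\,j})S^{-1}$ for $j=0,\dots,n-1$, which is then upgraded to $\tau_1=S\tau_2S^{-1}$ by checking agreement of two $\theta$-semi-linear maps on a spanning set; your converse, via equivalence of the parity conditions $\vec c\,(H_1)_t=\vec 0$ and $\vec c\,(H_2)_t=\vec 0$, is a harmless rephrasing of the paper's appeal to Theorem \ref{GMaruta:2}. The one genuine divergence is in how the pointwise relation is upgraded. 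The paper checks agreement of $\tau_2$ and $S^{-1}\tau_1S$ only on $\{P_2\tau_2^{\,i}\}_{i=0}^{k-1}$, asserting these are linearly independent (the justification it leaves implicit is the standard stabilization argument: for a semi-linear $\tau$, once $P\tau^{\,j}$ lies in the span of the earlier orbit points the spans of initial segments stop growing, so rank $k$ of the full orbit forces the first $k$ points to be independent); since $k\le n-1$, the column relations needed there have index at most $k\le n-1$, so the wrap-around index never appears and the closure $P_i\tau_i^{\,n}=\alpha P_i$ is never used in the forward direction. You instead verify agreement on the whole orbit $\{P_1\tau_1^{\,j}\}_{j=0}^{n-1}$, which lets you quote spanning directly from $\mathrm{rank}\,H_1=k$ without any independence lemma, at the price of handling $j=n-1$ through the closure relation. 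Both routes are valid: yours is more self-contained, the paper's establishes the conjugation using strictly weaker hypotheses.

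This makes your closing meta-claim inaccurate, though it does not affect the validity of your argument: the common $\alpha$ is \emph{not} genuinely needed to force $\tau_1=S\tau_2S^{-1}$. Restricting your own computation to the points $P_1\tau_1^{\,j}$ with $j\le n-2$ already suffices, because those points span $\F_q^k$ (the first $k\le n-1$ orbit points do, by the stabilization argument above), and for such $j$ only column relations of index $j+1\le n-1$ are used. So the boundary case you single out as ``the main point'' is avoidable rather than essential; where the hypothesis $(\tau_i,P_i,n)\in\Gamma_k^\alpha$ really earns its keep is in guaranteeing that each $H_i$ is a rank-$k$ parity check matrix of a (skew $\alpha$-cyclic) $[n,n-k]_q$-code in the first place.
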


\begin{proof}
``$\Rightarrow$''\
 Since $\mathscr{C}_1=\mathscr{C}_2$, there exists $S\in GL(k,q)$ such that
\begin{align*} [{}^t \! P_2,{}^t \tau_2(P_2),{}^t \tau_2^2(P_2),\dots,{}^t \tau_2^{n-1}(P_2)] &= {}^t \! S [{}^t \! P_1,{}^t \tau_1(P),\dots,{}^t \tau_1^{n-1}(P)] \\
 & = [{}^t (P_1S),{}^t (\tau_1(P_1)S),\dots,{}^t (\tau_1^{n-1}(P_1)S)]
   \end{align*}

From the first columns, we deduce that $\hat{S}(P_1)=P_1S=P_2$, i.e.
$\hat{S}^{-1}(P_2)=P_1$. Furthermore,
$\tau_2^i(P_2)=\tau_1^i(P_1)S=\hat{S}(\tau_1^i(P_1))=\hat{S}(\tau_1^i(\hat{S}^{-1}(P_2)))=
\hat{S}\circ\tau_1^i\circ \hat{S}^{-1}(P_2)$ for $i=1,\dots,n-1$, and $\tau_j^n(P_j)=\alpha P_j$ for $j=1,2$.

Since the set $\{ P_2,\tau_2(P_2),\dots,\tau_2^{n-1}(P_2)\}$ generates $\F^k_q$,
we see that every vector $\vec{v}\in\F_q^k$
can be written as
$\vec{v}=\sum_{i=0}^{n-1}\lambda_{i} \tau_2^i(P_2)$. So
we have
$$\tau_2(\vec{v})=\sum\limits_{i=0}^{n-1}\theta(\lambda_{i})\tau_2^{i+1}\left(P_2\right)=\sum\limits_{i=0}^{n-2}\theta(\lambda_{i})\hat{S}\circ\tau_1^{i+1}\circ \hat{S}^{-1}(P_2)+
\alpha\theta(\lambda_{i})P_2\ ,$$
$$\hat{S}\circ\tau_1\circ \hat{S}^{-1}(\vec{v})
=\sum\limits_{i=0}^{n-1}\theta(\lambda_{i})\hat{S}\circ\tau_1\circ \hat{S}^{-1}\circ\tau_2^i\left(P_2\right)=
\sum\limits_{i=0}^{n-1}\theta(\lambda_{i})\hat{S}\circ\tau_1^{i+1}\circ \hat{S}^{-1}\left(P_2\right)=$$
$$= \sum\limits_{i=0}^{n-2}\theta(\lambda_{i})\hat{S}\circ\tau_1^{i+1}\circ \hat{S}^{-1}(P_2)+\alpha\theta(\lambda_{i})\hat{S}(P_1)\ ,$$
i.e. $\tau_2(\vec{v})=\hat{S}\circ\tau_1\circ \hat{S}^{-1}(\vec{v})$ for any vector $\vec{v}\in\F_q^k$. Hence $\tau_2=\hat{S}\circ\tau_1\circ \hat{S}^{-1}$.

\smallskip

\noindent ``$\Leftarrow$''\ Since there is a matrix $S\in GL(k,q)$
such that $\tau_1=\hat{S}^{-1}\circ\tau_2\circ \hat{S}$ and $P_1=\hat{S}^{-1}(P_2)$, we
see that
\begin{align*}
[{}^t \! P_2,{}^t \tau_2(P_2),\dots,{}^t \tau_2^{n-1}(P_2)] & = [{}^t (P_1S),{}^t (\tau_1(P_1)S),\dots,{}^t (\tau_1^{n-1}(P_1)S)].
\end{align*}
By Theorem \ref{GMaruta:2}, the matrices defined by
$(\tau_1,P_1,n)$ and $(\tau_2,P_2,n)$ are parity check matrices of
skew $(\alpha,\theta)$-cyclic codes which differ by an invertible matrix ${}^t S$,
i.e. they correspond to the same linear code. Hence
$\mathscr{C}_1=\mathscr{C}_2$.
\end{proof}

Finally, let $\mathscr{C}_i$ be two codes defined by
$(\tau_i, P_i, n)\in\Gamma_k^\alpha$ for $i=1,2$. Denote by $m_{\tau_i}$
the minimal polynomial of the semi-linear map $\tau_i$, that is, the monic polynomial $m_{\tau_i}$ of
minimal degree such that $m_{\tau_i}(\tau_i)=0$ for $i=1,2$ (see \cite[Proposition 3.2]{TT2}).
Let us conclude this section by showing here with an example that in the non-commutative case, i.e. when
$\theta\not= id$, we can not obtain a similar result as Theorem 4 in \cite{M1}.
More precisely,
if $\mathscr{C}_1=\mathscr{C}_2$ then it follows easily that $m_{\tau_1}=m_{\tau_2}$, since
$\tau_2=\hat{S}\circ\tau_1\circ \hat{S}^{-1}$ for some $S\in GL(k,q)$ by Proposition \ref{GMaruta:4}, but the converse
of this statement is not true in general, as the following counterexample shows.

\begin{ex}
Consider $q=4$, $g_1(x)=1+ax+x^2+x^3$ and
$g_2(x)=1+a^2 x+x^2+x^3$ in $R$, where $\theta(\not= id)$ is the Frobenius automorphism and $a$ is a root of $y^2+y+1\in\F_2[y]$. Let $\mathscr{C}_i$ be the
skew $(1,\theta)$-cyclic $[14,11]_4$-codes with generator polynomial $g_i(x)$ for $i=1,2$. Since $g_1(x)\not=
g_2(x)$ in $R/R(x^{14}-1)$, we have
$\mathscr{C}_1\not=\mathscr{C}_2$. On the other hand, we get
\begin{align*}
\tau_1&=\begin{pmatrix}
0 & 1 &0 \\
0&0&1\\
1& a &1\\
\end{pmatrix}\circ\Theta
&&,&
\tau_2&=\begin{pmatrix}
0 & 1 &0 \\
0&0&1\\
1& a^2 &1\\
\end{pmatrix}\circ\Theta
 \end{align*}
whose minimal polynomials are $m_{\tau_1}(x)=x^6+x^2+1$
and $m_{\tau_2}(x)=x^6+x^2+1$ respectively (see \cite{TT2} for their construction).
\end{ex}

\noindent
This shows clearly that in general the condition $\mathscr{C}_1=\mathscr{C}_2$ is not equivalent to
$m_{\tau_1}=m_{\tau_2}$, as
it happens
when $\theta= id$ (e.g., see Theorem 4 in \cite{M1}).

\section{Some remarks and applications}

In this last section, we will consider two main consequences of some results obtained in Section \ref{Sec2bis}.
Let us recall here that $q=p^r$ for some prime  $p$ and $r\in\mathbb{Z}_{\geq 1}$ and that $\theta$ denotes an automorphism of
$\F_q$ given by $\theta (z):=z^{p^t}$ for any $z\in\mathbb{F}_q$, where $t$ is an integer such that $1\leq t\leq r-1$.

\subsection{On $1$-generator skew quasi-twisted codes}\label{Sec4}

The main result of this section (Theorem \ref{GMaruta:QT}) is a consequence of
the previous results and it has been
motivated principally by the fact that the class of
$1$-generator skew quasi-cyclic codes
generalizes the class of
the $1$-generator quasi-cyclic codes by obtaining new examples of linear
codes with BKLC parameters (see \cite{AGAS}).
This suggests that better codes
may be found in this new class, or simply by lengthening skew constacyclic codes.

\smallskip

First of all, let us note here that
$1$-generator skew quasi-twisted (SQT) codes can be easily defined from the notion
of $1$-generator QT codes (see, \cite[$\S 1$]{M2}).
So, inspired by \cite{AGAS}, \cite{B} and \cite{M2}, we give first the following
definition of skew quasi-twisted codes.

\begin{defi}\label{skew QT codes}
Take $\alpha\in\F_q^\ast$ and let $\theta$ be an automorphism of
$\F_q$. Denote by $\hat{R}_{\alpha,\theta}:=
R/R(x^N-\alpha)$ the polynomial
ring $R$ modulo $x^N-\alpha$. For $m\in\mathbb{Z}_{\geq 1}$ and
$$\boldsymbol{g}=(g_1(x),g_2(x),\dots,g_m(x))\in \hat{R}_{\alpha,\theta}^m,$$
the set
\[ \boldsymbol{\mathscr{C}_g}=\{ (r(x)g_1(x),r(x)g_2(x),\dots,r(x)g_m(x)) \mid r(x)\in \hat{R}_{\alpha,\theta} \}\]
is called the \textit{$1$-generator skew quasi-twisted (SQT)
code} of length $mN$ and index $m$ with generator $\boldsymbol{g}$. Moreover, if $\alpha=1$ then $\boldsymbol{\mathscr{C}_g}$
is called the \textit{$1$-generator skew quasi-cyclic (SQC) code} of length $mN$ and index $m$ with generator
$\boldsymbol{g}$.
\end{defi}

First of all, observe that by using similar arguments as in \cite[Theorem 6]{B},
we can easily deduce the following two results.

\begin{thm}\label{thm 6}
Let $\mathcal{C}$ be a $1$-generator SQT code of length $mN$ and index $m$ over $\mathbb{F}_q$
generated by $\boldsymbol{g}=(g_1(x),g_2(x),\dots,g_m(x))\in \hat{R}_{\alpha,\theta}^m$ with
$N$ a multiple of the order of $\theta$ and $\alpha\in\mathbb{F}_q^\theta$.
Then $\{\boldsymbol{g},x\boldsymbol{g},...,x^{N-\deg g(x)}\boldsymbol{g} \}$ forms a $\mathbb{F}_q$-basis for $\mathcal{C}$, where
$g(x)$ is the great common left divisor $gcld(g_1(x), ... , g_m(x), x^N-\alpha)$.
\end{thm}

\begin{cor}\label{cor}\label{final cor}
Let $\mathcal{C}$ be a $1$-generator SQT code of length $mN$ and
index $m$ over $\mathbb{F}_q$ generated by
$\boldsymbol{g}=(g(x),g(x)p_1(x),\dots,g(x)p_{m-1}(x))\in
\hat{R}_{\alpha,\theta}^m$, where $g(x)$ is a monic divisor of
$x^N-\alpha$ with $N$ a multiple of the order of $\theta$ and
$\alpha\in\mathbb{F}_q^\theta$. Then $\mathcal{C}$ is an
$\mathbb{F}_q$-free code with rank $N-\deg g(x)$.
\end{cor}

Finally, from Theorem  \ref{GMaruta:3} we know that an $[N,N-k]_q$-code is a skew $(\alpha,\theta)$-cyclic code with generator polynomial
$g(x)$ if and only if
$\mathscr{C}$ is a linear code with a parity check matrix
$$[g]^N:=[{}^t \! P,{}^t \! \tau (P),{}^t \! \tau^2 (P),\dots,{}^t \! \tau^{N-1}(P)],$$ where
$P=(1,0,\dots,0)\in\F_q^k$, $\tau (\vec{v}):=\Theta (\vec{v})T_{g(x)}$ for every $\vec{v}\in\mathbb{F}^k_q$
and $T_{g(x)}$ is the companion matrix of $g(x)$.

Let $g:=x^k-\sum_{i=0}^{k-1} a_ix^i\in R$ be a right divisor of the polynomial $x^N-\alpha\in R$ for some $\alpha\in\mathbb{F}_q^*$
and denote by $\mathcal{T}:\P^{k-1}(\F_q)\to\P^{k-1}(\F_q)$ the map
defined by $\tau$, i.e. $\mathcal{T}([\vec{x}]_{\sim}):=[\tau(\vec{x})]_{\sim}$ for every $[\vec{x}]_{\sim}\in\P^{k-1}(\F_q)$.
We can say that $\mathcal{T}$
is defined by $g$. Note that the columns of  $[g]^N=:[a_0,a_1,...,a_{k-1}]^N$ can be
considered as points in $\P^{k-1}(\F_q)$ of an orbit of
$\mathcal{T}$. Conversely, we can obtain similarly as above a skew $(\alpha,\theta)$-cyclic
$[N,N-k]_q$-code from an orbit of length $N$ of the map $\mathcal{T}$.

So, consider $m$ orbits
$\mathcal{O}_1,\mathcal{O}_2,\dots,\mathcal{O}_m$ of $\mathcal{T}$
of length $n_i$ and starting points $P_i\in\mathcal{O}_i$ for
$i=1,...,m$, respectively. For simplicity, take $P_1=(1,0,\dots,0)\in\F_q^k$. Thus $n_1=N$ and
denote by $[g]^{N}+P_2^{n_2}+\cdots+P_m^{n_m}$ the matrix
\begin{multline*} [{}^t \! P_1,{}^t \! \tau (P_1),{}^t \! \tau^2 (P_1),\dots,{}^t \! \tau^{N-1}(P_1) , \
{}^t \! P_2,{}^t \! \tau (P_2),{}^t \! \tau^2 (P_2),\dots,{}^t \! \tau^{n_2-1}(P_2) , \
\cdots \\ \cdots ,\ {}^t \! P_m,{}^t \! \tau (P_m),{}^t \! \tau^2 (P_m),\dots,{}^t \! \tau^{n_m-1}(P_m)]\ . \end{multline*}
Note that the matrix $[g]^{N}+P_2^{n_2}+\cdots+P_m^{n_m}$ can be interpreted
as a generator matrix of a linear $[N+n_2+...+n_m,k]_q$-code.
This fact gives a simple method
to concatenating skew constacyclic codes to obtain new constructions of linear codes that meet the parameters of some best known linear codes
(BKLC) for small values of $N$ and $q$ (compare the examples in Table \ref{arco:generalizadoinferior1} with the corresponding codes in
\url{http://www.codetables.de/},\ \url{http://www.win.tue.nl/~aeb/}\ and the database of Magma in \cite{magma}).

\begin{ex}
Consider $\F_4=\{ 0,1,a,a^2\}$, where $a$ is a root of $x^2+x+1\in\F_2[x]$.
Then the polynomial $g(x)=x^6-(1+a^2x+x^2+a^2x^3+x^4+a^2x^5))\in\F_4[x;\theta]$ divides  $x^7-a$ ($N=7, \alpha=a$), where $\theta(z)=z^2$ for every $z\in\F_4$ ($p^t=2$).
Consider the points $P_1:=[1:0:0:0:0:0], P_2:=[a:a:0:a:1:1]\in\P^5(\F_4)$. Then, under $\mathcal{T}$, the orbit of $P_1$ is of length $7$
and that of $P_2$ is of length $14$.
So, we get a linear $[21,6,12]_4$-code whose generator matrix is
{\tiny
\begin{align*} \begin{pmatrix}
  1 &  0 &  0 &  0 &  0 &  0 &  1 &  a  & 1 & a^2 &  1 & a^2 & a^2 &  1 &  1 &  a & a^2 &  a & a^2 & a^2  & a \\
  0 &  1 &  0 &  0 &  0 &  0 & a^2  & a  & 0 & a^2  & 1 & a^2 &  0 &  1  &  a  & 0 &  1 & a^2 &  1 &  0 & a^2  \\
  0  & 0 &  1  & 0 &  0  & 0 &  1  & 0  & a & a^2 & a^2  & a &  1  & 1 &  0 &  1 & a^2 & a^2  & 1 &  a  & a \\
  0 &  0 &  0 &  1 &  0  & 0 & a^2 &  a & a^2 &  1 &  1 &  0  & 1 &  a &  a  & 1 & a^2 & a^2  & 0 & a^2 &  a \\
  0  & 0 &  0 &  0  & 1 &  0 &  1 &  1  & a &  1 &  0  & a & a^2 &  0 &  a  & 1 &  a &  0  & 1& a^2  & 0 \\
  0  & 0 &  0 &  0 &  0  & 1 & a^2  & 1 &  a &  1 &  a &  a &  1  & 1 & a^2  & a & a^2 &  a &  a & a^2 & a^2 \\
\end{pmatrix} \end{align*}}
and for simplicity of notation, the above matrix can be written simply as $[g]^7+P_2^{14}=
[1a^21a^21a^2]^7+[aa0a11]^{14}$ (see the first case for $q=4$ and $k=6$ in Table \ref{arco:generalizadoinferior1}).
\end{ex}

\begin{ex}
Consider $\F_8=\{ 0,1,w,w^2,\dots,w^6\}$.
Then the polynomial $g(x)=x^3-(1+w^6x+w^4x^2)\in\F_8[x;\theta]$ divides  $x^4-w$ ($N=4,\alpha=w$), where $\theta(z)=z^4$ for every $z\in\F_8$ ($p^t=2$).
Moreover,
we can see that the points $P_1:=[1:0:0], P_2:=[w^6:w:1], P_3:=[w^5:1:0], P_4:=[1:1:1]$ in $\P^3(\F_8)$ have orbits
of length $4,12,12,6$, respectively.
So, we get a linear $[34,3,28]_8$-code whose generator matrix can be written as
$$[g]^4+P_2^{12}+P_3^{12}+P_4^6=[1w^6w^4]^4+[w^6w1]^{12}+[w^510]^{12}+[111]^6$$
which reaches the best known distance for a linear $[34,3]_8$-code (see the first case for $q=8$ in Table \ref{arco:generalizadoinferior1}).
\end{ex}

\begin{ex}
Consider $\F_9=\{ 0,1,\beta,\beta^2,\dots,\beta^7\}$, where $\beta$ is a root of $x^2+2x+2\in\F_3[x]$.
We set that $g(x)=x^4+(1+\beta^7x+\beta^4x^2+\beta^3x^3)=x^4-(\beta^4+\beta^3x+x^2+\beta^7x^3) \in\F_9[x;\theta]$
divides $x^5-\beta$ ($N=5, \alpha=\beta$), with $\theta(z)=z^3$ for every $z\in\F_9$ ($p^t=3$).
Let $\mathcal{T}$ be the semi-linear map defined by $\theta$ and $g(x)$. Consider the following three points $P_1$, $P_2$ and $P_3$ of $\P^3(\F_9)$:
\[ P_1=[1:0:0:0],\ P_2=[1:\beta^5:\beta^7:1],\ P_3=[\beta^3:\beta^6:\beta^3:1]. \]
Then, under $\mathcal{T}$, the orbit of $P_1$ is of length $5$ and the orbits of $P_2$ and $P_3$ are both of length $10$, and the matrix
\[ [g]^5+P_2^{10}+P_3^{10}=[\beta^4\beta^31\beta^7]^5+[1\beta^5\beta^71]^{10}+[\beta^3\beta^6\beta^31]^{10} \]
generates a linear $[25,4,19]_9$-code (see the first case for $q=9$ in Table \ref{arco:generalizadoinferior1}).
\end{ex}

\medskip

\noindent On the other hand, when $\alpha =1$, $N$ is a multiple of the order of $\theta$, $n_1=n_2=...=n_m=N$ and $P_i\in(\F_q^{\theta})^k$ for $i=1,\dots, m$, the matrix
$[g^N]+P_2^N+\cdots+P_m^N$ becomes a generator matrix of a $1$-generator $SQC$ code of length $mN$ and index $m$
with generator polynomial $\boldsymbol{g}$ given
by the following result which generalizes \cite[Theorem 2.3]{M2} (see for instance Table \ref{Table 2} for some examples of $1$-generator $SQC$ codes for $q=4$).

\begin{thm}\label{GMaruta:QT}
With the same notation as above, if $P_i\in(\F_q^{\theta})^k$ for $i=1,\dots, m$, then
$[g^N]+P_2^N+\cdots+P_m^N$ generates a $1$-generator SQC $[mN,k]_q$-code of length $mN$ and index $m$ with generator
\[ \boldsymbol{g}=\left(h^{\ast}(x),h^{\ast}(x)b_2(x^{-1}),\dots,h^{\ast}(x)b_m(x^{-1})\right)\in \hat{R}^m,\]
where $$h^\ast(x):=\theta^{N-k}(\hbar_0)^{-1}\left[\sum_{i=1}^{N-k} \theta^i\left( \hbar_{N-k-i}\right)x^i+1\right]$$ with $\hbar(x)=\hbar_0+\cdots+\hbar_{N-k-1}x^{N-k-1}+x^{N-k}$ such that $x^N-\theta^{k-N}(\alpha)=g(x)\hbar(x)$ and $b_i(x)$ is the
polynomial given by $(1,x,\dots,x^{k-1})P_i$  for $2\leq i\leq m$.
\end{thm}

\begin{proof}
Define $H:=[g^N]+P_2^N+\cdots+P_m^N$ and use the same notation as above.
Note that
\begin{multline*} H=[{}^t \! P_1,{}^t \! \tau (P_1),{}^t \! \tau^2 (P_1),\dots,{}^t \! \tau^{N-1}(P_1) , \
{}^t \! P_2,{}^t \! \tau (P_2),{}^t \! \tau^2 (P_2),\dots,{}^t \! \tau^{N-1}(P_2) , \
\cdots \\ \cdots ,\ {}^t \! P_m,{}^t \! \tau (P_m),{}^t \! \tau^2 (P_m),\dots,{}^t \! \tau^{N-1}(P_m)]
\end{multline*}
with $P_1=\vec{e_1}:=(1,0,\dots,0)\in(\F_q^{\theta})^k$ and $\tau^N(P_i)=P_i$ for $i=1,...,m$. By putting
$$H_i:=[{}^t \! P_i,{}^t \! \tau (P_i),{}^t \! \tau^2 (P_i),\dots,{}^t \! \tau^{N-1}(P_i)]$$ for all
$i=1,\dots,m$, we simply have $H=[H_1|H_2|\cdots |H_m]$.

\bigskip

\noindent\textit{Claim $1$.} $\tau^h(P_i)=P_i(\tau)(\tau^h(P_1))$
\qquad $\forall h=0,\dots,N-1 \ .$

\medskip

\noindent Note that $\tau^{j-1}(P_1)=\vec{e_{j}}\in\F_q^k$ for all
$j=1,\dots, k$, where $\vec{e_j}$ is the $j$-th canonical vector
of $\F_q^k$. Therefore, by putting
$P_i:=(\lambda_{i,0},...,\lambda_{i,k-1})\in(\F_q^{\theta})^k$ for
$i=1,...,m$, we have
\[P_i=\sum_{j=0}^{k-1}\lambda_{ij} \tau^j(P_1)=\left(\sum_{j=0}^{k-1}\lambda_{ij} \tau^j\right)(P_1)
=:P_i(\tau)(P_1), \] where $P_i(z):=\sum_{j=0}^{k-1}\lambda_{ij}
z^j\in\mathbb{F}_q^{\theta}[z]$. Thus
$$\tau^h(P_i)=\sum_{j=0}^{k-1}\lambda_{ij}
\tau^{h+j}(P_1)=P_i(\tau)(\tau^h(P_1)) \qquad \forall
h=0,\dots,N-1 \ . \qquad\qquad Q.E.D. $$

\bigskip

Define $\tau^s\cdot [{}^t \! Q_0,{}^t \! Q_2,\dots,{}^t \! Q_{N-1})]:=[{}^t \! \tau^s(Q_0),{}^t \! \tau^s(Q_2),\dots,{}^t \! \tau^s(Q_{N-1})]$
for any $s\in\mathbb{Z}_{\geq 0}$.

\bigskip

\noindent\textit{Claim $2$.} $H_i=P_i(\tau)\cdot H_1$ \qquad
$\forall i=1,...,m$.

\medskip

\noindent For $i=1,\dots,m$, by \textit{Claim $1$} we see that
\begin{align*}
H_i&=[{}^t \! P_i,{}^t \! \tau (P_i),{}^t \! \tau^2 (P_i),\dots,{}^t \! \tau^{N-1}(P_i)]\\
&=[{}^t \! P_i(\tau)(P_1),{}^t \! P_i(\tau)(\tau (P_1)),{}^t \! P_i(\tau)(\tau^2 (P_1)),\dots,{}^t \! P_i(\tau)(\tau^{N-1}(P_1))] \\
&=:P_i(\tau)\cdot [{}^t \! P_1,{}^t \! \tau (P_1),{}^t \! \tau^2
(P_1),\dots,{}^t \! \tau^{N-1}(P_1)]=P_i(\tau)\cdot H_1\ . \qquad\qquad\qquad Q.E.D.
\end{align*}

\bigskip

Let $H_1^\ast$ be a parity check matrix of the code $\mathscr{C}$ such that $\pi(\mathscr{C})=Rg(x)/R(x^N-1)$,
written as
$$ H_1^\ast\equiv\begin{pmatrix} h^\ast (x)\\ xh^\ast (x)
\\ \vdots \\ x^{k-1}h^\ast(x)
\end{pmatrix} =[ J \mid
\widehat{H}_1^\ast ] $$ with $\det(J)\not= 0$, where from Proposition \ref{asterisco} it follows that
$$h^\ast(x):=\theta^{N-k}(\hbar_0)^{-1}\left[\sum_{i=1}^{N-k} \theta^i\left( \hbar_{N-k-i}\right)x^i+1\right]$$
with $\hbar(x)=\hbar_0+\cdots+\hbar_{N-k-1}x^{n-k-1}+x^{N-k}$ such that $x^N-\theta^{k-N}(\alpha)=g(x)\hbar(x).$

\bigskip

\noindent\textit{Claim $3$.} $H_i=J^{-1}H_i^*$ \quad $\forall
i=1,...,m$, \ where $H_i^\ast:=\begin{pmatrix}
h^\ast (x)P_i(x^{-1})\\ x (h^\ast (x)P_i(x^{-1})) \\
\vdots \\ x^{k-1} (h^\ast(x)P_i(x^{-1}))\end{pmatrix}$.

\medskip

\noindent By hypothesis, $H_1$ is the parity check matrix of
$\mathscr{C}$ in the standard form which can be written as
$H_1=[I_k\mid \widehat{H}_1]$, where $I_k$ is the $k\times k$
identity matrix. In particular, this implies that
$H_1=J^{-1}H_1^\ast$.
Now, let $A$ be a matrix with $N$ columns and for any
$h\in\mathbb{Z}_{\geq 0}$ define the bilinear map
$$\diamond :\ \mathbb{F}_q[x^{-1}]\times \mathrm{Mat}(k,N;\mathbb{F}_q)\to
\mathrm{Mat}(k,N;\mathbb{F}_q)$$
$\forall \lambda,\mu\in\mathbb{F}_q,\ \forall k,h\in\mathbb{Z}_{\geq 0}$ and $\forall A,B\in \mathrm{Mat}(k,N;\mathbb{F}_q)$ as follows:
\[ x^{-h}\diamond A := A\left( \begin{array}{ c | c }
\vec{0} &  1 \\  \hline
I_N & {}^t \! \ \vec{0} \\
 \end{array}\right)^h = A \left( \begin{array}{ c  c c | c }
0 &  \cdots &0 &  1 \\  \hline
1 &  \cdots &0 & 0 \\
\vdots & \ddots & \vdots & \vdots \\
0 &\cdots & 1  & 0 \\
 \end{array}\right)^h
 \]
  $$(\lambda x^{-h}+\mu x^{-k})\diamond A= \lambda (x^{-h}\diamond A)+\mu (x^{-k}\diamond A),$$
 $$x^{-h}\diamond (\lambda A+\mu B)=\lambda (x^{-h}\diamond A)+\mu (x^{-h}\diamond A)\ .$$
\noindent Moreover, observe that $\lambda x^{-h}\diamond (CD)=C
(\lambda x^{-h}\diamond D)$ for every $\lambda\in\mathbb{F}_q$, $C\in\mathrm{Mat}(k,k;\mathbb{F}_q)$ and $D\in\mathrm{Mat}(k,N;\mathbb{F}_q)$. Furthermore, if
$A = \begin{pmatrix} \vec{a}_0
\\ \vdots \\ \vec{a}_{k-1}
\end{pmatrix}
\equiv \begin{pmatrix} a_0(x)
\\ \vdots \\ a_{k-1}(x)
\end{pmatrix}$, where $\pi: \mathbb{F}_q^N\to R/R(x^N-1)$ and $\pi (\vec{a}_j)=a_j(x)$ for $j=0,...,k-1$, then we have
$x^{-1}\diamond A \equiv \begin{pmatrix} a_0(x)x^{-1}
\\ \vdots \\ a_{k-1}(x)x^{-1}
\end{pmatrix}$, where $p(x)x^{-1}=p_1+p_2x+...+p_{N-1}x^{N-2}+p_0x^{N-1}$ if $p(x)=p_0+p_1x+...+p_{N-1}x^{N-1}$, since $x^{N}=1$ and $x^{-1}=x^{N-1}$.
Hence we get $(\lambda x^{-h}+\mu x^{-k})\diamond A \equiv \begin{pmatrix} \lambda a_0(x)x^{-h}
\\ \vdots \\ \lambda a_{k-1}(x)x^{-h}
\end{pmatrix}+\begin{pmatrix} \mu a_0(x)x^{-h}
\\ \vdots \\ \mu a_{k-1}(x)x^{-h}
\end{pmatrix}$ and
\begin{align*}
x^{-1}\diamond H_1 &= x^{-1}\diamond [{}^t \! P_1,{}^t \! \tau (P_1),{}^t \! \tau^2 (P_1),\dots,{}^t \! \tau^{N-1}(P_1)]\\ &
=[{}^t \! \tau (P_1),{}^t \! \tau^2 (P_1),\dots,{}^t \! \tau^{N-1}(P_1),{}^t \! P_1]\\&
=[{}^t \! \tau (P_1),{}^t \! \tau^2 (P_1),\dots,{}^t \! \tau^{N-1}(P_1),{}^t \! \tau^{N}(P_1)]\\&=\tau\cdot H_1\\
x^{-2}\diamond H_1&=[{}^t \! \tau^2 (P_1),\dots,{}^t \! \tau^{N-1}(P_1),{}^t \! P_1,{}^t \! \tau(P_1)]=\tau^2\cdot H_1 \\
&\ \ \vdots\\
x^{-h}\diamond  H_1&=\tau^h\cdot H_1\ .
\end{align*}
Then we deduce that $q(\tau)\cdot H_1=q(x^{-1})\diamond  H_1$ for
any polynomial $q(t)\in\mathbb{F}_q[t]$, where
$(x^{-1})^h:=x^{-h}$. Keeping in mind that
$P_i(z)\in\mathbb{F}_q^{\theta}[z]$, by \textit{Claim $2$} we have
for $i=1,...,m$ 
\begin{align*}
H_i&= P_i(\tau) \cdot H_1=
P_i(x^{-1})\diamond (J^{-1} H_1^\ast)
=J^{-1} (P_i(x^{-1})\diamond H_1^\ast)=\\
&=J^{-1} \left(P_i(x^{-1})\diamond  \begin{pmatrix}
h^\ast (x)\\ xh^\ast (x) \\ \vdots \\ x^{k-1}h^\ast(x)
\end{pmatrix}\right)=J^{-1} \begin{pmatrix}
(h^\ast (x))P_i(x^{-1})\\ (xh^\ast (x))P_i(x^{-1}) \\ \vdots \\ (x^{k-1}h^\ast(x))P_i(x^{-1})
\\
\end{pmatrix}= \\
& =J^{-1} \begin{pmatrix}
h^\ast (x)P_i(x^{-1})\\ x (h^\ast (x)P_i(x^{-1})) \\
\vdots \\ x^{k-1} (h^\ast(x)P_i(x^{-1}))\end{pmatrix} =:J^{-1}
H^\ast_i \ . \qquad \qquad \qquad \qquad Q.E.D. 
\end{align*}

\noindent So, by \textit{Claim $3$} we have
$$H=[H_1|H_2|\cdots |H_m]=[J^{-1}H_1^\ast|J^{-1}H_2^\ast|\cdots |J^{-1}H_m^\ast]
=J^{-1}[H_1^\ast | H_2^\ast | \cdots | H_m^\ast ],
$$
where $H_i^\ast:=\begin{pmatrix}
h^\ast (x)P_i(x^{-1})\\ x (h^\ast (x)P_i(x^{-1})) \\
\vdots \\ x^{k-1} (h^\ast(x)P_i(x^{-1}))\end{pmatrix}$ for
$i=1,...,m$.
Finally, by Corollary \ref{final cor} we can conclude that
$H=[H_1|H_2|\cdots |H_m]$ is a generator matrix of a $1$-generator
SQC $[mN,k]_q$-code of length $mN$ and index $m$ with generator
polynomial
\[ \boldsymbol{g}=\left(h^{\ast}(x),h^\ast (x)P_2(x^{-1}),\dots,h^\ast (x)P_m(x^{-1})\right)\in \hat{R}^m,\]
where $h^\ast(x)$ is given by Proposition \ref{asterisco}.
\end{proof}

{\tiny
\begin{table}[!h]
\begin{center}\renewcommand{\arraystretch}{1.6}
\setlength{\tabcolsep}{8pt}
\begin{tabular}{ c p{8cm} c c c }\hline
$[n,k,d]_q$ & Generator Matrix  & $N$ & $\alpha$ & $p^t$ \\ \hline
$[25,4,17]_4$ & $[1a^21a^2]^5+[a^2a^2a^21]^{10}+[aa^210]^{10}$ & 5 & $a$ & 2\\
$[35,4,24]_4$ & $[1a1a]^5+[0a^2a1]^{10}+[10a^21]^{10}+[1a^201]^{10}$ & 5 & $a^2$ & 2 \\
$[40,4,28]_4$ & $[1a^21a^2]^5+[0a^2a1]^{10}+[a^2a^201]^{10}+[1a^201]^{10}+[1010]^5$ & 5 & $a$ & 2 \\
$[45,4,32]_4$ & $[1a1a]^5+[a100]^{10}+[1a^211]^{10}+[0a01]^{10}+[a^2a^2a1]^{10}$ & 5 & $a^2$ & 2 \\
$[50,4,36]_4$ & $[1a1a]^5+[1aa^21]^{10}+[1101]^{10}+[aaa^21]^{10}+[a^2a^2a1]^{10}+[a^2100]^5$ & 5 & $a^2$ & 2 \\
$[60,4,44]_4$ & $[1a^21a^2]^5+[1110]^{10}+[0aa^21]^{10}+[00a^21]^{10}+[a^2111]^{10} +[a^2a^211]^{10}+[a0a1]^5$ & 5 & $a$ & 2\\
$[65,4,48]_4$ & $[1a1a]^5+[aa01]^{10}+[1011]^{10}+[a1a^21]^{10} +[1a^210]^{10}+[0a11]^{10}+[a^2100]^5+[a^20a^21]^{5}$ & 5 & $a^2$ & 2 \\
$[85,4,64]_4$  &$[1a1a]^5+[a^2a11]^{10}+[0011]^{10}+[a^2a^201]^{10}+[1a^201]^{10}+[0a^2a^21]^{10}+[a^21a1]^{10}+[1a^2a^21]^{10}+[1010]^{5}+[0a10]^5$  & 5  & $a^2$ & 2   \\
$[12, 5, 6]_4$ & $[a^21a^21a^2]^{6}+[1a^2a^2aa^2]^{6}$    &   6  &  1  &  2 \\
$[18, 5, 10]_4$ & $[a^21a^21a^2]^{6}+[01a^201]^{6}+[a^20a^2aa]^{6}$ & 6  &  1 &  2 \\
$[20, 5, 12]_4$ & $[aa110]^{10}+[a^20a^2aa]^{10}$  &  10   &  1  &  2  \\
$[30, 5,20]_4$   &   $[11a^2a1]^{10}+[a^2aaa^21]^{10}+[01a^201]^{10}$   &   10  &  1  &  2 \\
$[32, 5, 21]_4$  &  $[a^2aa^2a^20]^{8}+[aa1a^20]^{8}+[0a^20a^2a]^{8}+[1001a^2]^{8}$    &  8   &  1  &  2 \\
$[36, 5, 24]_4$  &  $[a^2aa^20a^2]^{12}+[a^2aaa^21]^{12}+[01a^201]^{12}$    &  12   &  1  &  2 \\
$[48, 5, 33]_4$  &  $[1101a^2]^{12}+[a^20a^2aa]^{12}+[0a^20a^2a]^{12}+[a^2a^2a^21a]^{12}$    &  12   &  1  &  2 \\
$[120, 5, 88]_4$  &  $[a^21a1a^2]^{30}+[a^2aaa^21]^{30}+[a^20a^2aa]^{30}+[0a^200a]^{30}$    &  30   &  1  &  2 \\
$[248, 5, 184]_4$  &  $[aa^2a^2aa]^{62}+[a^2aaa^21]^{62}+[aa1a^20]^{62}+[0a^200a]^{62}$    &  62   &  1  &  2 \\
$[21,6,12]_4$ & $[1a^21a^21a^2]^7+[aa0a11]^{14}$ &  $7$ & $a$ &  2 \\
$[49,6,32]_4$ & $[1a^21a^21a^2]^7+[aa^2a^2101]^{14}+[a^21aaa1]^{14}+[aa^21a^210]^{14}$ & 7 & $a$ & 2 \\[5pt]
\hline 
$[34,3,28]_8$ & $[1w^6w^4]^4+[w^6w1]^{12}+[w^510]^{12}+[111]^6$ & 4 & $w$ & 2 \\
$[50,4,41]_8$ & $[w^51w^4w^5]^5+[w^4w^2w^21]^{15}+[w^4w^510]^{15}+[w^5w^2w1]^{15}$ & 5 & $w$ & 2 \\
$[65,4,56]_8$ & $[w^51w^4w^5]^5+[w^4w^4w^31]^{15}+[w^6w^210]^{15}+[w^2110]^{15}+[1w^4w^21]^{15}$ & 5 & $w$ & 2 \\[5pt]
\hline
$[25,4,19]_9$ & $[\beta^4\beta^31\beta^7]^{5}+[1\beta^5\beta^71]^{10}+[\beta^3\beta^6\beta^31]^{10}$ & 5 & $\beta$ & 3 \\
$[42,4,34]_9$ & $[10\beta^20]^{6}+[\beta^71\beta 1]^{12}+[\beta^5\beta^7\beta 1]^{12}+[0\beta^4\beta^21]^{12}$ & 6 & $\beta^2$ & 3 \\[5pt]
\hline \\
\end{tabular}
\caption{New constructions of some linear codes with BKLC parameters.}
\label{arco:generalizadoinferior1}
\end{center}
\end{table}}

{\tiny
\begin{table}[!h]
\begin{center}\renewcommand{\arraystretch}{1.6}
\setlength{\tabcolsep}{8pt}
\begin{tabular}{ c p{8cm} c c c }\hline
$[n,k,d]_q$ & Generator Matrix  & $N$ & $\alpha$ & $p^t$ \\ \hline
$[12,5,6]_4$ & $[a^21a^21a^2]^{6}+[01111]^{6}$ & 6 & $1$ & 2\\
$[20,5,12]_4$ & $[aa110]^{10}+[01011]^{10}$ & 10 & $1$ & 2\\
$[30,5,20]_4$ & $[aa110]^{10}+[11011]^{10}+[10011]^{10}$ & 10 & $1$ & 2\\
$[36,5,24]_4$ & $[1101a^2]^{12}+[01011]^{12}+[10001]^{12}$ & 12 & $1$ & 2\\
$[48,5,33]_4$ & $[10a^2a0]^{12}+[11010]^{12}+[00111]^{12}+[10111]^{12}$ & 12 & $1$ & 2\\
$[90,5,64]_4$ & $[a^210a1]^{30}+[01011]^{30}+[00111]^{30}$ & 30 & $1$ & 2\\
$[120,5,88]_4$ & $[a^20aa^2a]^{30}+[00011]^{30}+[10111]^{30}+[10010]^{30}$ & 30 & $1$ & 2\\
$[248,5,184]_4$ & $[10a^210]^{62}+[00011]^{62}+[11010]^{62}+[11101]^{62}$ & 62 & $1$ & 2\\[2pt]
\hline \\
\end{tabular}
\caption{Some $1$-generator $SQC$ codes for $q=4$ and $k=5$ with BKLC parameters.}
\label{Table 2}
\end{center}
\end{table}}

\subsection{MDS skew $(\alpha,\theta)$-cyclic codes}\label{MDS}

First of all, let us observe that in the commutative case, from
\cite[Theorem 6]{M1} it is known that there exists a MDS
$\alpha$-constacyclic $[n,k]$-code over $\F_q$ with $(n,q)\not=1$ and $2\leq
k\leq n-2$, if and only if $n=p$.

Let us show here that in the non-commutative case, there exist MDS 
skew constacyclic $[n,k]_q$-codes with $(n,q)\not=1$ and $2\leq k\leq n-2$ also when $n\not= p$.

\begin{ex}
Over $\F_4=\F_2[\alpha]$ with $\alpha^2+\alpha+1=0$ and $\theta (z)=z^2$, consider the skew
$(\alpha,\theta)$-cyclic $[4,2]_4$-code $\mathscr{C}$ generated by
$g(x)=\alpha x^2+\alpha^2 x+\alpha^2$. Then $\mathscr{C}$ is a MDS code with parity check matrix
$$\begin{pmatrix}
\alpha^2 & \alpha^2 & \alpha & 0 \\
0 & \alpha & \alpha & \alpha^2 \\
\end{pmatrix}.$$
\end{ex}

\begin{ex}
Over $\F_9=\F_3[\omega]$ with $\omega^3+\omega+2=0$ and $\theta (z)=z^2$, consider the skew
$(1,\theta)$-cyclic $[6,4]_9$-code $\mathscr{C}$ generated by
$g(x)=\omega^5 x^2+\omega^7 x+\omega^7$. Then $\mathscr{C}$ is a MDS code with parity check matrix
{\footnotesize $$\begin{pmatrix}
\omega^7 & \omega^7 & \omega^5 & 0 & 0 & 0 \\
0 & \omega^5 & \omega^5 & \omega^7 & 0 & 0 \\
0 & 0 & \omega^7 & \omega^7 & \omega^5 & 0  \\
0 & 0 & 0 & \omega^5 & \omega^5 & \omega^7  \\
\end{pmatrix}.$$ }
\end{ex}

\smallskip

By Theorem \ref{GMaruta:2}, we recall that the parity check matrix of a MDS skew $(\alpha,\theta)$-cyclic $[n,n-k]_q$-code has the form 
$[{}^t P, 
{}^t \tau(P), {}^t \tau
^2(P),\dots,{}^t \tau^{n-1}(P)],$ where $P\in\mathbb{F}^k_q$, $\tau (\vec{v}):=\Theta (\vec{v})T$ for every $\vec{v}\in\mathbb{F}^k_q$ with $T\in GL(k,q)$ and $\tau^n(P)=\alpha P$ with $n\leq \text{ord}(\tau)$.

Therefore, under certain conditions on $ q $, $ k $ and $ n $, the
existence of MDS skew $(\alpha,\theta)$-cyclic $[n, k]_q $-codes
is strictly related to some algebraic conditions, as it is shown in the following results.

\begin{prop}\label{prop MDS k>3}
Assume that $q>2$, $k\geq 4$ and
\[ \begin{cases}
q+k-\dfrac{\phantom{|}\sqrt{q}+5\phantom{|}}{4}< n &, \text{ for } q \text{ odd}\\
q+k-\dfrac{2\sqrt{q}+7}{4}<n &, \text{ for } q \text{ even}\ .
\end{cases} \]
If there exists a MDS skew $(\alpha,\theta)$-cyclic $[n,n-k]_q$-code, then $n\leq q+1$.
\end{prop}

\begin{proof}
Let $\mathscr{C}\subseteq\F_q^n$ be a MDS skew $(\alpha,\theta)$-cyclic $[n,n-k]_q$-code. By Theorem  \ref{GMaruta:2}, a parity check matrix of $\mathscr{C}$ has the form $[{}^t P, 
{}^t \tau(P), {}^t \tau
^2(P),\dots,{}^t \tau^{n-1}(P)],$ where $P\in\mathbb{F}^k_q$ and $\tau=T\circ\Theta$ for some $T\in GL(k,q)$ with
 $\tau (\vec{v}):=\Theta (\vec{v})T$ for every $\vec{v}\in\mathbb{F}^k_q$ and such that $\tau^n(P)=\alpha P$.
Note that the set $\mathcal{K}:=\{ [\tau^i(P)] \colon i=0,\dots,n-1 \}\subseteq \mathbb{P}^{k-1}(\F_q)$,
defines an $n$-arc in the finite projective space $\mathbb{P}^{k-1}(\F_q)$. Thus, by \cite{BBT1, BBT2, KM, ST, T} and the hypothesis
\[
\begin{cases}
q-\frac{1}{4}\sqrt{q}+(k-1)-\frac{1}{4}<n&, q\text{ odd}\\
q-\frac{1}{2}\sqrt{q}+(k-1)-\frac{3}{4}<n&, q\text{ even}\ ,
\end{cases}
\]
we deduce that $\mathcal{K}$ lies on a unique rational normal curve. Hence $n=|\mathcal{K}|\leq q+1$.
\end{proof}

Let us note here that 
for some $a\in\mathbb{F}_q^*$, a skew monic polynomial $F(x)\in
\mathbb{F}_q[x;\theta]$ could not divide $x^m-a$ on the right for
every $m\in\mathbb{Z}_{\geq 0}$. Nevertheless, in \cite[$\S
3.2.1$]{TT1} it was shown that any skew polynomial $F(x)$ as above
with a regular constant term always divides the polynomial $x^m-1$
on the left for some $m\in\mathbb{Z}_{>0}$. Observe that the proof
of this fact works well also for right divisions with only slight
modifications.

So, in line with the classical definition of subexponent of a
polynomial in $\mathbb{F}_q[x]$ (see, e.g., \cite[pp. 6--7]{H}),
let us give here a similar definition for the non-commutative ring
$\mathbb{F}_q[x;\theta]$ (see also \cite[Definitions 2.1(a) and 3.1]{CL}). 

\begin{defi}
Let $F(x)$ be a skew monic polynomial in $\mathbb{F}_q[x;\theta]$.
If $F(0)\neq 0$, then $F(x)$ has \textit{right exponent} $e(F)$
if $e(F)$ is the smallest positive integer such that $F(x)$ is a
right divisor of $x^{e(F)}-a$ for some $a\in\mathbb{F}_q^*$. Similarly, 
one can define the notion of \textit{left exponent} of $F(x)$.
\end{defi}

The following result is an algebraic characterization
of the existence of some MDS skew $(\alpha,\theta)$-cyclic codes with certain parameters.

\begin{thm}\label{teo:cuaderno2}
Assume that $q>2$, $k\geq 4$ and
\[ \begin{cases}
q+k-\dfrac{\phantom{|}\sqrt{q}+5\phantom{|}}{4}< n \leq q+1 &, \text{ for } q \text{ odd}\\
q+k-\dfrac{2\sqrt{q}+7}{4}< n \leq q+1 &, \text{ for } q \text{ even}.
\end{cases} \]
Then, there exists a {\em MDS} skew $(\alpha,\theta)$-cyclic
$[n,n-k]_q$-code if and only if there exists a polynomial $x^2+a
x+b\in\F_q[x;\theta]$ with $b\neq 0$ such that $e(x^2+ax+b)=n$.
\end{thm}

\begin{proof}
Let $\mathscr{C}\subseteq\F_q^n$ be a MDS skew $(\alpha,\theta)$-cyclic $[n,n-k]_q$-code. By using the same notation as in Proposition \ref{prop MDS k>3},
through a projectivity given by $A\in GL(k,q)$, we can send the points of $\mathcal{K}$ onto the canonical rational normal curve which is the image of the Venorese map $\nu_k\colon \mathbb{P}^1(\F_q) \longrightarrow \mathbb{P}^{k-1}(\F_q)$ given by $$\nu_k[x_0:x_1]=[x_0^{k-1}:x_0^{k-2}x_1:\dots:x_0x^{k-2}:x^{k-1}]\ .$$
\noindent By taking $\tau':=A\circ \tau\circ  A^{-1}$ given by $\tau'(Q):=A\circ \tau\circ  A^{-1}(Q)=A\circ \tau (QA^{-1})=\tau (QA^{-1})A$ for every $Q\in\mathbb{F}_q^k$, we can define the set
\[ \mathcal{K}':=\{ [\tau^i(P) A] \colon i=0,\dots,n-1 \}=\{ [(\tau')^i(PA)] \colon i=0,\dots,n-1 \}\subseteq \mathbb{P}^{k-1}(\F_q), \]
where $\tau'=M'\circ\Theta$ for some $M'\in GL(k,q)$.
Now, consider the following commutative diagram
$$\begin{CD}
\mathbb{P}^{k-1}(\F_q) @>{\tau':=\ M'\circ\Theta}>> \mathbb{P}^{k-1}(\F_q) \\
@ A{\nu_k}AA  @ AA{\nu_k }A \\
\mathbb{P}^1(\F_q) @>>{\ \  ~~\quad~ ~ ~ \sigma:=M\circ\Theta \ \ ~ ~\quad~ ~ ~ }> \mathbb{P}^1(\F_q)
\end{CD}\quad ,$$
where $M\in GL(2,q)$ is such that  $\nu_k\circ\sigma= \tau'\circ\nu_k$. In this way, we can identify the elements of  $\mathcal{K}'$ 
with the elements of $ \left\{ [(\sigma)^i(\vec{p})]=[(M\circ\Theta)^i(\vec{p})]\colon i=0,\dots,n-1 \right\}\subseteq \P^1(\F_q)$,
where $[\vec{p}]:=\nu_k^{-1}([PA])$. 
Consider a projectivity of $\mathbb{P}^1(\mathbb{F}_q)$ given by $B\in GL(2,q)$ such that $(1,0)B=\vec{p}$. Let
$M''=\begin{pmatrix}
\alpha & \beta\\
\gamma & \delta
\end{pmatrix}$
be a $2\times 2$ matrix with $\beta\neq 0$ such that the following diagram commutes,
$$\begin{CD}
\mathbb{P}^1(\F_q) @>{\sigma:=M\circ\Theta}>> \mathbb{P}^1(\F_q) \\
@ A{B}AA  @ AA{B}A \\
\mathbb{P}^1(\F_q) @>>{\sigma'':=M''\circ\Theta}> \mathbb{P}^1(\F_q)
\end{CD} \quad , $$
where $[(1,0)]B=[\vec{p}]$. Define
$C:=\begin{pmatrix}
1 & 0\\
\alpha & \beta
\end{pmatrix}$ and observe that
$C^{-1}=\beta^{-1}\begin{pmatrix}
\beta & 0\\
-\alpha & 1
\end{pmatrix}$. 

Then
$$C^{-1}\circ (M''\circ\Theta)\circ C=
\begin{pmatrix}
1 & 0\\
\theta(\alpha) & \theta(\beta)
\end{pmatrix}
\begin{pmatrix}
\alpha & \beta\\
\gamma & \delta
\end{pmatrix}
\begin{pmatrix}
1 & 0\\
-\alpha\beta^{-1} & \beta^{-1}
\end{pmatrix}\circ\Theta=\begin{pmatrix}
0 & 1\\
-b & -a
\end{pmatrix}\circ\Theta $$
for some $a,b\in\mathbb{F}_q$ with $b\neq 0$. This gives the further commutative diagram
$$\begin{CD}
\mathbb{P}^1(\F_q) @>{\sigma'':=M''\circ\Theta}>> \mathbb{P}^1(\F_q) \\
@ A{C}AA  @ AA{C}A \\
\mathbb{P}^1(\F_q) @>>{A\circ\Theta}> \mathbb{P}^1(\F_q)
\end{CD} \quad , $$
where $A:=\begin{pmatrix}
0 & 1\\
-b & -a
\end{pmatrix}$ and $[(1,0)]C=[(1,0)]$.
Combining the previous commutative diagrams, we obtain the following commutative diagram
$$\begin{CD}
\mathbb{P}^{k-1}(\F_q) @>{\tau':=\ M'\circ\Theta}>> \mathbb{P}^{k-1}(\F_q) \\
@ A{\nu_k\circ B\circ C}AA  @ AA{\nu_k\circ B\circ C}A \\
\mathbb{P}^1(\F_q) @>>{\ \  ~~\quad~ ~ ~ A\circ\Theta \ \ ~ ~\quad~ ~ ~ }> \mathbb{P}^1(\F_q)
\end{CD}\quad ,$$
such that the elements of  $\mathcal{K}'$ can be identified with
the elements of $$\left\{ [(A\circ\Theta)^i(1,0)]\colon
i=0,\dots,n-1 \right\}\subseteq \P^1(\F_q)\ .$$ Let $f(x):=x^2+ax
+b$ be the characteristic polynomial of $A$ and define the
bijective map
$$\pi\colon \F_q^2\longrightarrow \F_q[x;\theta]/(f(x)) =:R$$ defined by
$\pi(\vec{v}):=v_0+v_1x$, where $\vec{v}:=(v_0,v_1)$. 

Observe that $\pi\left((A\circ\Theta)^j(\vec{v})\right)=x^j\cdot \pi(\vec{v})=x^j(v_0+v_1x)$ for any $j\in\mathbb{Z}_{\geq 0}$ and that $\pi$ induces the bijective map
$$\widehat{\pi} \colon \left( \F_q^2\setminus\{ (0,0)\} \right)/\ \F_q^\ast \longrightarrow \left(R\setminus\{ 0 \} \right)\ /\ \F_q^\ast $$ given by $\widehat{\pi}[(v_0,v_1)]:=[\pi(v_0,v_1)].$
So, $\widehat{\pi}[(A\circ\Theta)^j(\vec{v}) ]=[x^j( v_0+v_1x)]$ for any $j\in\mathbb{Z}_{\geq 0}$.
Moreover, since $\left( \F_q^2\setminus\{ (0,0)\} \right)/\ \F_q^\ast \cong \P^1(\F_q)$, for all $i=1,\dots,n-1$, we have
\begin{align*}
[(1,0)]\not=[(A\circ\Theta)^i(1,0)]\Longrightarrow \widehat{\pi}[(1,0)] \not= \widehat{\pi}[(A\circ\Theta)^i(1,0)]
& \Longrightarrow[1]\not= [x^i]  \\
& \Longrightarrow
\lambda \not=x^i  \tag{ $\forall \lambda\in\F_q^\ast$} \\
 & \Longrightarrow
 x^i  - \lambda \not\equiv 0 \text{ mod }f(x) \tag{ $\forall \lambda\in\F_q^\ast$}
\end{align*}
i.e., there exists $f(x)=x^2+ax+b\in\F_q[x;\theta]$ with $b\neq 0$
such that $x^i-\lambda\not\equiv \text{mod }f(x)$ for all
$\lambda\in\F_q^\ast$ and $i=1,\dots,n-1$. Since $[1]=[x^n]$, we
conclude that there exists $x^2+ax+b\in\F_q[x;\theta]$ with $b\neq
0$ such that $e(x^2+ax+b)=n$.

\medskip

\noindent Looking closely the above proof, by
construction the converse of the statement becomes at this point
easy to prove.
\end{proof}

\medskip

Finally, observe that the above result provides a method to construct via Veronese embeddings some MDS skew $(\alpha,\theta)$-cyclic
$[n,n-k]_q$-code.

\begin{cor}\label{existence cor}
If a polynomial $x^2+ax+b\in\F_q[x;\theta]$ with $b\neq 0$ has right exponent $e(x^2+ax+b)=n\leq q+1$, then
for any integer $h$ such that $1\leq h\leq n-1$
there exists a {\em MDS} skew $(\alpha,\theta)$-cyclic
$[n,h]_q$-code for some $\alpha\in\mathbb{F}_q^*$.
\end{cor}

\begin{proof} Let $k$ be an integer such that $2\leq k\leq n-1$ and define $A:=\begin{pmatrix}
0 & 1\\
-b & -a
\end{pmatrix} $. Note that there exists a matrix $M\in GL(k,q)$ such that the following diagram commutes:
$$ \begin{CD}
\mathbb{P}^{k-1}(\F_q) @>{M\circ\Theta}>> \mathbb{P}^{k-1}(\F_q) \\
@ A{\nu_k}AA  @ AA{\nu_k}A \\
\mathbb{P}^1(\F_q) @>>{A\circ\Theta}> \mathbb{P}^1(\F_q)
\end{CD} \quad , $$
where $\nu_k[x_0:x_1]=[x_0^{k-1}:x_0^{k-2}x_1:\dots:x_0x^{k-2}:x^{k-1}]$. Thus the set 
$$\left\{ [(A\circ\Theta)^i(1,0)]\colon i=0,\dots,n-1 \right\}\subseteq \P^1(\F_q) $$
provides a set $ \left\{ [(M\circ\Theta)^i(1,0,\dots , 0)]\colon i=0,\dots,n-1 \right\}\subseteq \P^{k-1}(\F_q)$ of points lying on a rational normal curve in $\P^{k-1}(\F_q)$.
This gives a matrix of type
$$[{}^t P, {}^t \tau(P), {}^t \tau^2(P),\dots,{}^t \tau^{n-1}(P)]\ ,$$ where $P=(1,0,\dots , 0)\in\mathbb{F}^k_q$ and $\tau := M\circ\Theta$ 
is such that $\tau^n(P)=\beta P$ for some $\beta\in\mathbb{F}_q^*$. We conclude by Theorems \ref{GMaruta:1e}, \ref{GMaruta:2} and Corollary \ref{cor gen}.
\end{proof}

When $\theta$ is the Frobenius automorphism of $\mathbb{F}_q$, in \cite[Theorem 3.1]{CL} the authors show that the right and left exponents of a skew polynomial $g(x)$ in $\mathbb{F}_q[x;\theta]$ are equal and they provide an algebraic method to compute the right exponent of $g(x)$. On the other hand, the following MAGMA program defines 
a command EXP$(p^t,[b,a,1])$
which determines the right exponent of any skew polynomial
$x^2+ax+b\in\mathbb{F}_q[x;\theta]$ with $b\neq 0$,
when $\theta(z)=z^{p^t}$:

\begin{prog}\label{Prog-2}~
\begin{verbatim}
c:= ... ; F<w>:=GF(c);
B:=[x : x in F | x ne 0];

Exp:=function(qq,g,t)
R<x>:=TwistedPolynomials(F:q:=qq);
f:=R!g; n:=Degree(f)-1; 
repeat n:=n+1;
v:=[t];
for i in [1..n-1] do
v:=v cat [0];
end for;
s:=v cat [1]; 
g:=R!s; _,r:=Quotrem(g,f); 
until r eq R![0] or n ge c+2;
if r eq R![0] then
return n;
end if;
if n ge c+2 then
return c^Degree(f);
end if;
end function;

EXP:=function(aa,h) 
BB:=[x : x in F | x ne 0]; C:={};
for j in BB do
C:= C join {Exp(aa,h,j)};
end for;
return Minimum(C);
end function;
\end{verbatim}
\end{prog}

\begin{ex}
Let $\mathbb{F}_q:=\{0,1,w,w^2,\dots ,w^{q-2}\}$ be a field with $q$ elements. 
By using Program \ref{Prog-2} and Corollary \ref{existence cor}, we can deduce the
existence Table \ref{Table 3} of some MDS skew $(\alpha,\theta)$-cyclic
$[n,k]_q$-codes whose parameters are not met in the commutative case
by any MDS $\alpha$-constacyclic code (compare this table with Tables $1, 2$ and $3$ in \cite{M3}):  
{\tiny
\begin{table}[!h]
\begin{center}\renewcommand{\arraystretch}{1.6}
\setlength{\tabcolsep}{8pt}
\begin{tabular}{ c c p{1cm} c c c }\hline
$q$ & $n$ & $k$ & $\alpha$ & $p^t$ for $\theta$ & Polynomial $x^2+ax+b$ \\ \hline
$8$ & $6$ & $3, 4$ & $1,1$ & $2$ & $x^2+x+w^3$ \\
$9$ & $6$ & $3, 4$ & $1,w^4$ & $3$ & $x^2+x+w^2$ \\
$16$ & $8$ & $3, 4, 5$ & $1,1,1$ & $2$ & $x^2+x+w$ \\
$16$ & $12$ & $3, 4, 5$ & $1,1,1$ & $2$ & $x^2+wx+w$ \\
$25$ & $10$ & $3, 4, 5$ & $w^{12},w^6,1$ & $5$ & $x^2+x+w^2$ \\
$32$ & $10$ & $3, 4, 5$ & $1,1,1$ & $2$ & $x^2+x+w$ \\
$32$ & $15$ & $3, 4, 5$ & $1,1,1$ & $2$ & $x^2+x+w^7$ \\
$49$ & $14$ & $3, 4, 5$ & $w^{32},w^{24},w^{16}$ & $7$ & $x^2+x+w^2$ \\
$64$ & $12$ & $3, 4, 5$ & $1,1,1$ & $2$ & $x^2+x+w$ \\
$64$ & $18$ & $3, 4, 5$ & $1,1,1$ & $2$ & $x^2+x+w^3$ \\[2pt]
\hline \\
\end{tabular}
\caption{Existence of some MDS skew $(\alpha,\theta)$-cyclic
$[n,k]_q$-code.}
\label{Table 3}
\end{center}
\end{table}} \\
For instance, consider the first case $q=8$ in Table \ref{Table 3} and, for simplicity, let $k=3$. When $\theta=id$, the only possible exponents less or equal to $q+1=9$ of polynomial
of type $x^2+ax+b$ with $b\neq 0$ are $2, 3, 7$ and $9$. On the other hand, Table \ref{Table 3} shows that one can construct
a MDS skew $(1,\theta)$-cyclic $[6,3]_8$-code. Indeed, by using the same notations as in 
Corollary \ref{existence cor}, we have
$$A:=\begin{pmatrix}
0 & 1\\
-w^3 & -1
\end{pmatrix}=\begin{pmatrix}
0 & 1\\
w^3 & 1
\end{pmatrix}\ \ \mathrm{and} \ \ \nu_3[x_0:x_1]=[x_0^2:x_0x_1:x_1^2]\ .$$
Thus we obtain the set 
$$\left\{ [(1,0)], [(0,1)], [(w^3,1)], [(w^3,w^2)], [(1,w^3)], 
[(1,1)] \right\}\subset\mathbb{P}^1(\mathbb{F}_8)$$ which gives via $\nu_3$ the following
set of points in $\mathbb{P}^2(\mathbb{F}_8)$: 
$$\mathcal{K}:=\left\{ [(1,0,0)], [(0,0,1)], [(w^6,w^3,1)], [(w^6,w^5,w^4)], [(1,w^3,w^6)], 
[(1,1,1)] \right\}\ .$$
By $\mathcal{K}\subset\mathbb{P}^2(\mathbb{F}_8)$ one can construct the following matrix
$$H=\begin{pmatrix}
1 & 0 & w^6 & w^6 & 1 & 1\\
0 & 0 & w^3 & w^5 & w^3 & 1\\
0 & 1 & 1 & w^4 & w^6 & 1
\end{pmatrix}$$
which can be interpreted as a parity check matrix of a MDS skew $(1,\theta)$-cyclic
$[6,3]_8$-code whose generator matrix $G$ in standard form is
$$G=\begin{pmatrix}
1 & 0 & 0 & w^5 & w & w^6\\
0 & 1 & 0 & w^2 & w^2 & w^4\\
0 & 0 & 1 & w^5 & w^2 & w^5
\end{pmatrix}\ .$$
Similarly, one can construct all the MDS skew $(\alpha,\theta)$-cyclic
$[n,k]_q$-codes of Table \ref{Table 3}.
\end{ex}

\section{Conclusion}

\noindent In this paper, we consider skew constacyclic codes
$\mathcal{C}$ (called also skew $(\alpha,\theta)$-cyclic codes)
and some algebraic and geometric properties of their dual codes.
After proving again in an easy way that the dual code of
$\mathcal{C}$ is a skew $(\alpha^{-1},\theta)$-cyclic code with an
explicit generator polynomial, we show that the columns of the
generator and the parity check matrices of $\mathcal{C}$ are
orbits of points in projective spaces via semi-linear maps. Two
main applications of this property are given. The first
application consists of some results on $1$-generator skew
quasi-twisted codes which prove that a suitable concatenation of
skew $(\alpha,\theta)$-cyclic codes gives in fact a $1$-generator
skew quasi-cyclic code. The second one shows that under certain
conditions on the parameters $n,k$ and $q$ of a code, the
existence of MDS skew $(\alpha,\theta)$-cyclic $[n, k]_q $-codes
is strictly related to some algebraic conditions which are
explicitly determined.

\bigskip

\section*{Appendix: some bounds for the Hamming 
distance of some skew codes}

A linear code $\mathcal{C}$
of length $n$ and dimension $k$, called an $[n,k]_q$-code, is a $k$-dimensional subspace of $\mathbb{F}_q^n$, where
$\mathbb{F}_q$ is a field with $q$ elements.
Moreover, an $[n,k,d]_q$-code is an $[n,k]_q$-code $\mathcal{C}$ with minimum Hamming distance $d=d(\mathcal{C})$, where
$$d(\mathcal{C}):=\min \{d(\vec{x},\vec{y}):\ \vec{x},\vec{y} \in \mathcal{C}, \ \vec{x}\neq \vec{y} \}$$
and $d(\vec{x},\vec{y})$ is the Hamming distance between the two vectors $\vec{x},\vec{y}$. 

Let us denote by
$d_q(n,k)$ the largest value of $d$ for which an 
$[n,k,d]_q$-code exists. 
It is well known that a large number of new linear codes over 
small fields achieving the best known bounds  
$d_q(n,k)$ have been constructed as cyclic, constacyclic and quasi-cyclic in the non-commutative case 
(see e.g. \cite{AGAS}, \cite{B0}).

In this Appendix, after recalling some basic notions, by using a factorization algorithm of A. Leroy \cite{L}, 
we give in $\S \ref{Sec3-1}$ a BCH lower bound for the minimum Hamming distance $d$ of skew module codes (Theorem \ref{corchete:2}) which can be useful in error-correcting codes and for some decoding algorithms. Furthermore, an immediate application of this general result to skew constacyclic codes is shown (Proposition \ref{corchete:log}) and two upper bounds for $d$ are given in $\S \ref{Sec3-2}$ (see Corollaries \ref{cor1-distance} and \ref{upper d}) as consequences of a remark and of a result which characterizes some Maximum Distance Separable (MDS) skew constacyclic codes (Theorem \ref{teo:ultimos1}).

%

\medskip

With the same notation as in Section $\ref{Sec2}$, consider $f\in R$
of degree $n:=\deg f$ and 
the following $\F_q$-linear isomorphism of
left $\F_q$-modules (or $\F_q$-vector spaces):

\[
\begin{aligned}
\varphi \colon \phantom{aaaaat}\F^n_q\phantom{aaaaa} &\longrightarrow \phantom{aaa} R/Rf\\
    (a_0,a_1,\dots,a_{n-1})&\longmapsto [a_0+a_1x+\cdots a_{n-1}x^{n-1}]\ .
\end{aligned}
\]  

\smallskip

\noindent Observe that $f$ can be assumed to be a monic polynomial in $R$ and let $m$ be the order of $\theta$. 
Without any condition on $m$, note that the set $R/Rf$ can
be always considered also as a left
$R$-module.

\medskip

Let us recall here the definitions of the main skew codes
we will treat in this Appendix.

\begin{defi}[\cite{B1}, \cite{TT1}]\label{def}
An $f$-module $\theta$-code $\mathscr{C}$
is a linear code in $\F_q^n$ which corresponds via $\varphi$ to a left $R$-submodule
$Rg/Rf\subset R/Rf$ in the basis $1,x,...,x^{n-1}$, where $g$ is a right divisor of $f$ in $R$.
The length of the code $\mathscr{C}$ is $n=\deg f$ and its dimension $k=n-\deg g$.
For simplicity, we will denote this code $\mathscr{C}=(g)_{n,\theta}^k\subset\F_q^n$ and we will
say that $\mathscr{C}$ is generated by $g\in R$. Furthermore, if $\theta=id$ then an $f$-module $\theta$-code is called simply an $f$-module code. 
\end{defi}

By specializing the above general definition, we have another version of Definition \ref{skew GCC} (cf. Theorem \ref{R-module}).

\begin{defi}[\cite{B0},\cite{B1}]
A linear code $\mathscr{C}\subseteq \mathbb{F}_q^n$ is called a
\textit{skew $(\alpha ,\theta )$-cyclic code} (or simply, a
\textit{skew $\alpha$-constacyclic code}) 
if $\alpha\in \F_q^\ast$ and $\mathscr{C}$ is a $(x^n-\alpha)$-module $\theta$-code.
\end{defi}

Finally, we will need also the following basic notion.

\begin{defi}\label{def d}
$($\cite{LLO-0}$)$ For $i\in \mathbb{Z}_{\geq 0}$, we define recursively $\mathcal{N}_i^{\theta}(a)$ for any $a\in\mathbb{F}_q$ as
$$\mathcal{N}_0^{\theta}(a)=1,\ \ \mathcal{N}_{i+1}^{\theta}(a)=\theta (\mathcal{N}_i^{\theta}(a))a\ .$$ Furthermore, if $q(x)=\sum_j a_jx^j\in R$, then by \cite[Proposition 2.9]{LLO-0} the 
skew evaluation of $q(x)$ in $a\in\mathbb{F}_q$ is given by $q(a)=\sum_j a_j\mathcal{N}_j^{\theta}(a)$.
\end{defi}

\subsection{A BCH lower bound for skew module codes}\label{Sec3-1}

In \cite{L}, A. Leroy showed how a factorization of a skew polynomial
can be made in a suitable commutative polynomial ring. By this method, we can transfer for instance some properties of constacyclic codes to skew constacyclic codes and vice versa. More in general, a factorization of a skew polynomial in $\F_q[x;\theta]$ can be made in $\F_q[x]$ by a Leroy's algorithm \cite[Theorem 2.5]{L} and, in the same spirit of Hartmann--Tzeng bounds \cite{HT}, a lower bound for 
the minimum Hamming distance can be found. 

For any prime $p$ and $i\in\mathbb{Z}_{\geq 1}$, A. Leroy defines $[i]:=\dfrac{p^i-1}{p-1}$ and
\[  \F_q[x^{[]}]:=\left\{ \sum_{i = 0}^{m}\alpha_i x^{[i]}\in\F_q[x] \ : \ m\in\mathbb{Z}_{\geq 0} \right\}.  \]
\noindent Let us note here that the Leroy's algorithm and the above definitions can be generalized by considering a power of the Frobenius automorphism. More precisely, consider the automorphism $\theta(a):=a^{p^s}$ for all $a\in\F_q$ with $1\leq s\leq r$.
By putting $[i]_s:=\dfrac{(p^s)^i-1}{p^s-1}$ instead of $[i]$, the set $\F_q[x^{[]}]$ can be replaced with
\[  \F_q[x^{[]_s}]:=\left\{ \sum_{i\geq 0}\alpha_i x^{[i]_s}\in\F_q[x] \ : \ m\in\mathbb{Z}_{\geq 0} \right\}\subseteq \F_q[x] .  \]
For any polynomial $p(x):=\sum_{i=0}^{m}\alpha_ix^i\in R$, one can define
$$p^{[]_s}(x):=\sum_{i=0}^{m}\alpha_ix^{[i]_s}\in\F_q[x^{[]_s}]$$ the $[p^s]-$polynomial associated to $p(x)$ and $\F_q[x^{[]_s}]$
the set of $[p^s]-$polynomials. 

Moreover, note that $[i]_s=[i]$ and $\F_q[x^{[]_s}]=\F_q[x^{[]}]$ for $s=1$ and observe that a similar 
result as \cite[Theorem 2.5, (5)]{L} holds, that is, for any $b(x)\in R$ we have
\begin{displaymath}
\qquad \quad t(x)\in Rb(x) \iff t^{[]_s}(x)\in\F_q[x]b^{[]_s}(x)\ . \tag{*}
\end{displaymath}

\medskip

Putting $R':=\F_q[x]$, we can consider the following 
$\F_q$-linear isomorphism $\varphi^{[]_s}$ of (left) $\F_q$-vector spaces associated to $\varphi$:
\[
\begin{aligned}
\varphi^{[]_s} \colon \phantom{aaaaat}\F^{[n]_s}_q\phantom{aaaaa} &\longrightarrow \phantom{aaa} R'/R'f^{[]_s}\\
    (b_0,b_1,\dots,b_{[n]_s-1})&\longmapsto [b_0+b_1x+\cdots + b_{[n]_s-1}x^{[n]_s-1}]\ .
\end{aligned}
\]  

\smallskip

\noindent Moreover, note that the following $\F_q$-linear morphism of left $\F_q$-vector spaces
\[
\begin{aligned}
\psi_s \colon \phantom{aaaaat}R\phantom{aaaaa} &\longrightarrow \phantom{aaa} R'\\
    a(x):=a_0+a_1x+...+a_{m}x^{m} & \longmapsto a^{[]_s}(x):=a_0+a_1x^{[1]_s}+...+a_{m}x^{[m]_s}
\end{aligned}
\]  

\smallskip

\noindent allows us to define by $(*)$  an injective $\F_q$-linear morphism $i$ between left $\F_q$-vector spaces
{\small
\[
\begin{aligned}
i : \phantom{aat}R/Rf\phantom{aaa} &\longrightarrow \phantom{aaa} R'/R'f^{[]_s}\\
    [a(x)] \phantom{aaat} & \longmapsto \phantom{aaa} i([a(x)]):=
    [a^{[]_s}(x)]
\end{aligned}
\]  
}

\noindent This gives the following commutative diagram:

\begin{displaymath}
    \xymatrix{
        \F_q^n \ar[r]^{j} \ar[d]_{\varphi} & \ \F_q^{[n]_s} \ar[d]^{\varphi^{[]_s}}\\
        R/Rf \ \ar[r]^{i}      & \ R'/R'f^{[]_s} } \tag{$**$ }
\end{displaymath}

\noindent where $j:=(\varphi^{[]_s})^{-1}\circ i\circ\varphi$ is an injective $\mathbb{F}_q$-linear 
morphism of left $\mathbb{F}_q$-vector spaces. 

\begin{obs}\label{keyremark}
From $(**)$ it follows that $wt_{[n]_s}(j(\vec{c}))=wt_n(\vec{c})$ for any $\vec{c}\in\F_q^n$, where $wt_m(\vec{e})$ 
denotes the weight of $\vec{e}\in\F_q^m$.
\end{obs}

\begin{defi}
Let $\mathscr{C}:=(g)_{n,\theta}^k\subset\F_q^n$ be an $f$-module $\theta$-code generated by $g\in R$. 
Then $\mathscr{C}^{[]_s}$ denotes the linear code given by $(\varphi^{[]_s})^{-1}(g^{[]_s})_{[n]_s}^h$ with 
$h:=[n]_s-[\deg g]_s$, i.e. $\mathscr{C}^{[]_s}=(g^{[]_s})_{[n]_s, id}^{[n]_s-\deg g^{[]_s}}$, where $\deg g^{[]_s}=[\deg g]_s$.
\end{defi}

\begin{prop}
Let $\mathscr{C}\subseteq\F_q^n$ be a skew $(\alpha,\theta)$-cyclic code for some $\alpha\in\F_q^\ast$.
Then $\mathscr{C}^{[]_s}$ 
is an $\alpha$-constacyclic code in $\F_q^{[n]_s}$.
\end{prop}

\begin{proof}
By Theorem \ref{GMaruta:1}, let $g\in R$ be the generator polynomial of $\mathscr{C}$.
Since $g$ is a right divisor of $x^n-\alpha$, by $(*)$ we see that the $[p^s]-$polynomial $g^{[]_s}\in\F_q[x^{[]_s}]$ associated to $g$
is a right divisor of $x^{[n]_s}-\alpha$. Therefore, the linear code  $\mathscr{C}^{[]_s}:=(\varphi^{[]_s})^{-1}(g^{[]_s})_{[n]_s}^{[n]_s-[\deg g]_s}$ is an $\alpha$-constacyclic code in $\F_q^{[n]_s}$.
\end{proof}

\begin{obs}\label{remark}
More generally, if $\mathscr{C}\subseteq\F_q^n$ is an $f$-module $\theta$-code generated by $g\in R$,
then still using $(*)$, we see that the linear code $\mathscr{C}^{[]_s}$ 
is an $f^{[]_s}$-module code in $\F_q^{[n]_s}$ generated by $g^{[]_s}\in R'$. Furthermore, it is known that
$\mathscr{C}^\perp$ is again a module $\theta$-code if and only if $\mathscr{C}$ is a skew constacyclic code
(see \cite[Theorem 1]{B1}).
\end{obs}

Let us prove now the following general result.

\begin{thm}\label{corchete:2}
Let $\mathscr{C}\subseteq\F_q^n$ be an $f$-module $\theta$-code generated by $g\in R$. Then 
$$d(\mathscr{C})\geq d(\mathscr{C}^{[]_s})$$ and equality holds if and only if there exists 
a vector $\vec{v}\in\mathscr{C}^{[]_s}$ of minimal weight such that $\varphi^{[]_s}(\vec{v})=[v^{[]_s}]$ with 
$v^{[]_s}\in\F_q[x^{[]_s}]$.
\end{thm}

\begin{proof}
Note that from $(**)$ it follows that $j(\mathscr{C})$ is a linear code in $\mathbb{F}_q^{[n]_s}$ such that $j(\mathscr{C})\subseteq\mathscr{C}^{[]_s}$. Moreover, by Remark \ref{keyremark}
we have $wt_n(\vec{v})=wt_{[n]_s}(j(\vec{v}))$ for any $\vec{v}\in\mathscr{C}$, that is, $d(\mathscr{C})=d(j(\mathscr{C}))$.

Therefore, we get $d(\mathscr{C})=d(j(\mathscr{C}))\geq d(\mathscr{C}^{[]_s})$ and by $(*)$ we conclude that
$$d(\mathscr{C}) = d(\mathscr{C}^{[]_s}) \iff d(j(\mathscr{C})) = d(\mathscr{C}^{[]_s}) \iff \exists\ \vec{c}\in \mathscr{C}\ :\ wt_{[n]_s}(j(\vec{c}))=d(\mathscr{C}^{[]_s}) $$
i.e. $d(\mathscr{C}) = d(\mathscr{C}^{[]_s}) \iff \exists\ \vec{v}\in \mathscr{C}^{[]_s} \ :\ wt_{[n]_s}(\vec{v})=d(\mathscr{C}^{[]_s})\ \mathrm{and}\ \vec{v}\in j(\mathscr{C})$.
\end{proof}

\begin{prop}
Let $\mathscr{C}\subseteq\F_q^n$ be an $f$-module $\theta$-code generated by $g\in R$ with deg$(g)>0$. 
Then
$\mathscr{C}^{[]_s}\subset\F_q^{[n]_s}$ is not a Maximum Distance Separable (MDS) linear code.
\end{prop}

\begin{proof}
Suppose that $\mathscr{C}^{[]_s}\subset \F_q^{[n]_s}$ is a MDS linear code. Then
$$
d(\mathscr{C}^{[]_s})=[n]_s-\text{dim}(\mathscr{C}^{[]_s})+1
=\text{deg} (g^{[]_s})+1=[ \text{deg}(g) ]_s+1,
$$
where $g^{[]_s}$ is the generator polynomial of $\mathscr{C}^{[]_s}$.
So by Theorem \ref{corchete:2} we conclude that
$$[ \text{deg}(g) ]_s+1=d(\mathscr{C}^{[]_s})\leq d(\mathscr{C})\leq n-\dim \mathscr{C}+1=\text{deg}(g)+1,$$ 
but this gives a contradiction because deg$(g)\geq 1$ and $[l]_s>l$ for all $l\in\Z_{\geq 1}$. 
\end{proof}

\begin{lem}\label{Lema[k]}
Let $\F_{q^{[k]_s}}^*=\langle w\rangle$ and $\F_{q^{[k]_s}} \subseteq \F_{q^{[n]_s}}$. Then 
$$[n]_s\leq q^{[k]_s}-1\ \iff \ rank\left( V(w,w^2,\dots,w^{[n]_s-1}) \right)=[n]_s,$$
where
$$V(w,w^2,\dots,w^{[n]_s-1}) :=
\begin{pmatrix}
1&1&1&\cdots&1\\
1&w&w^2&\cdots&w^{[n]_s-1}\\
1&w^2&\left(w^2\right)^2&\cdots&\left(w^{[n]_s-1}\right)^2\\
\vdots&\vdots&\vdots& &\vdots\\
1&w^{[n]_s-1}&\left(w^2\right)^{[n]_s-1}&\cdots&\left(w^{[n]_s-1}\right)^{[n]_s-1}\\
\end{pmatrix}.$$
\end{lem}

\begin{proof} 
Write $\mathcal{M}:=V(w,w^2,\dots,w^{[n]_s-1})$.

\noindent ``$\Rightarrow$''\ Suppose that $rank\left( \mathcal{M} \right)\neq [n]_s$. Then we have 

\begin{displaymath}
det(\mathcal{M}):=\displaystyle\prod\limits_{0\leq i< j\leq [n]_s-1}(w^j-w^i)=0, \tag{$\#$}
\end{displaymath}

that is, $w^{j-i}=1$ for some $i,j$ such that $0\leq i< j\leq [n]_s-1$. So the order of $w$,
denoted by $ord(w)$, must be less than or equal to $[n]_s-1$. Thus we get $[n]_s\leq q^{[k]_s}-1= ord(w)\leq [n]_s-1$, which is impossible. Hence $det(\mathcal{M})\neq 0$, that is, $rank\left( \mathcal{M} \right) = [n]_s$.

\medskip

\noindent ``$\Leftarrow$''\ Note that $rank\left( \mathcal{M} \right)=[n]_s$ implies by $(\#)$ that $w^{j-i}\not=1$ for any $i,j$ such that $0\leq i< j\leq [n]_s-1$. Thus $q^{[k]_s}-1=ord(w)> [n]_s-1$, i.e. $[n]_s\leq q^{[k]_s}-1$.
\end{proof}

By way of example, let us give here an application of the above arguments to obtain a lower bound for the minimum Hamming distance of skew constacyclic codes by finding roots of polynomials in $R'$ instead of $R$.

\begin{prop}\label{corchete:log}
Let $\mathscr{C}\subseteq\F_q^n$ be an $f$-module $\theta$-code generated by $g\in R$ of degree $k$.
Let $\F_{q^{[k]_s}}^*=\langle w\rangle$ and $\F_{q^{[k]_s}} \subseteq \F_{q^{[n]_s}}$ with $q=p^r$ for some prime $p\geq 2$. Assume that
\[ k\geq \log_{p^s}\left[1+\left(\frac{p^s-1}{r}\right)\cdot \log_p\left(\frac{(p^s)^n+p^s-2}{p^s-1}\right)\right]\ . \]
If $g^{[]_s}(\omega^{l+ci})=0$ for $i=0,\dots,\Delta-2$, some $l\in\Z_{\geq 0}$
and $c$ such that $(c,q^{[k]_s}-1)=1$, then $d(\mathscr{C})\geq \Delta$.
\end{prop}

\begin{proof}
Note that
$$
k\geq \log_{p^s}\left[1+\left(\frac{p^s-1}{r}\right)\cdot \log_p\left(\frac{(p^s)^n+p^s-2}{p^s-1}\right)\right]
\Longleftrightarrow $$
$$ r\left(\frac{(p^s)^k-1}{p^s-1}\right)\geq \log_p\left(1+[n]_s \right)
\Longleftrightarrow
p^{r\cdot [k]_s} \geq \left(1+[n]_s \right)\Longleftrightarrow
q^{[k]_s}-1 \geq [n]_s\ .$$

\smallskip

\noindent Thus $ord(\omega)=q^{[k]_s}-1\geq [n]_s$. Furthermore, since $(c,q^{[k]_s}-1)=1$, we have
\[ (\omega^c)^j\not= 1, \text{ for }j=1,\dots,[n]_s-1\ . \]
From Remark \ref{remark} we know that $\mathscr{C}^{[]_s}\subseteq\F_q^{[n]_s}$ is an $f^{[]_s}$-module code with generator polynomial  $g^{[]_s}\in R'$.
Since $g^{[]_s}(\omega^{l+ci})=0$, for $i=0,\dots,\Delta-2$ and some $l\in\Z_{\geq 0}$, by Theorem \ref{corchete:2} and
\cite[Theorem 3.9]{TT2} we conclude that $d(\mathscr{C})\geq d(\mathscr{C}^{[]_s})\geq \Delta$.
\end{proof}

\begin{obs}
In the same spirit of Proposition \ref{corchete:log}, any result which gives a lower bound for the minimum Hamming distance
$d(\mathscr{C}^{[]_s})$ of $\mathscr{C}^{[]_s}$ in the commutative case can be used to get a lower bound for $d(\mathscr{C})$
by Theorem~\ref{corchete:2}.
\end{obs}

\subsection{Upper bounds for some skew 
module codes}\label{Sec3-2}

First of all, let us give here the following

\begin{obs}\label{rem upper bound}
If $\mathscr{C}\subseteq \F^n_q$ is a linear code, then $d(\mathscr{C})+d(\mathscr{C}^\perp)\leq (n-k+1)+[n-(n-k)+1]=n+2$. 
So, if we know that $d(\mathscr{C}^\perp)\geq \delta$ for some $\delta\in\mathbb{Z}_{\geq 1}$, then we have 
$d(\mathscr{C})\leq \min \{n-\delta+2,n-k+1\}$, which gives an easy  
upper bound for $d(\mathscr{C})$.
\end{obs}

\noindent From Theorem \ref{GMaruta:1e} we know that the dual code $\mathscr{C}^\perp$ of a skew $(\alpha,\theta)$-cyclic code is a skew $(\alpha^{-1},\theta)$-cyclic code. The first upper bound for the minimum Hamming distance is obtained
by the following result.

\begin{cor}\label{cor1-distance}
Let $\mathscr{C}\subseteq \F^n_q$ be a skew $(\alpha,\theta)$-cyclic code generated by $g\in R$. Then
$$d(\mathscr{C}^{[]_s})\leq d(\mathscr{C})\leq \min \{n-d({\mathscr{C}^\perp}^{[]_s})+2,\deg g+1\},$$
where $\mathscr{C}^{[]_s}\subseteq\F_q^{[n]_s}$ is an $\alpha$-constacyclic code generated by $g^{[]_s}\in R'$ and
${\mathscr{C}^\perp}^{[]_s}\subseteq\F_q^{[n]_s}$ is an $\alpha^{-1}$-constacyclic code generated by the $[p^s]$-polynomial 
$h^{[]_s}\in R'$ associated to $h\in R$ given by {\em Proposition \ref{asterisco}}.
\end{cor}

\begin{proof}
The statement follows from Theorem \ref{GMaruta:1e}, Remark \ref{rem upper bound} and Theorem \ref{corchete:2} applied to $\mathscr{C}$ and 
$\mathscr{C}^\perp$.
\end{proof}

Before to give another upper bound of the minimum Hamming
distance for some special skew module codes (e.g., see \cite{BL}), we prove the
following result which extends \cite[Theorem 2.10]{M2} to the
non-commutative case.

\begin{prop}\label{teo:ultimos1}
Let $q=p^r$ for some prime $p$ and $r\in\mathbb{Z}_{\geq 1}$. Let $\mathscr{C}\subseteq\F_q^n$ be an $f$-module $\theta$-code generated by $g(x)\in R$. 
Suppose that $\deg g(x)=k$ with $0<k<n$ and that there exist $\eta\in\mathbb{F}_{q^k}^*$ and $t\in\mathbb{Z}$ with $1\leq t\leq r$ such that
$$g(x)=\text{l.l.c.m.}\left\{ x-\eta^{p^{it}}\colon i=0,\dots,k-1\right\}\in R\ .$$
Then 

\medskip

$\mathscr{C}$ is a MDS code $\iff$ $\text{det}\left[ \mathcal{N}_{i_j}^\theta \left(\eta^{p^{(l-1)t}}\right)\right]_{1\leq j,l\leq k} \not=0$, 
for any subset of indexes $\{ i_1,\dots, i_k\}\subset\{0,1,2,\dots,n-1\}$ with
$i_1<\dots < i_k$, where $\mathcal{N}_{i_j}^\theta$ is as in {\em Definition \ref{def d}}.
\end{prop}

\begin{proof} 
Note that, for any $a\in\mathbb{F}_{q^k}$, $\theta$ can be extended to $\mathbb{F}_{q^k}$ by $\theta (a):=a^{p^s}$ and that $N_j^{\theta}(a)=a^{[j]_s}$ for any $j\in\mathbb{Z}_{\geq 0}$. Write
\begin{equation}\label{etaN}
\mathcal{T}_{n,k}:=
\begin{pmatrix}
\mathcal{N}_0^\theta(\eta) 	& \mathcal{N}_1^\theta(\eta)	& \cdots 	& \mathcal{N}_{n-1}^\theta(\eta) \\
\mathcal{N}_0^\theta\left(\eta^{p^t}\right) 	& \mathcal{N}_1^\theta\left(\eta^{p^t}\right)	& \cdots  	& \mathcal{N}_{n-1}^\theta\left(\eta^{p^t}\right) \\
\vdots 						& \vdots 						& 			& \vdots \\
\mathcal{N}_0^\theta\left(\eta^{p^{(k-1)t}}\right)	& \mathcal{N}_1^\theta\left(\eta^{p^{(k-1)t}}\right)	& \cdots 	& \mathcal{N}_{n-1}^\theta\left(\eta^{p^{(k-1)t}}\right) \\
\end{pmatrix}\ ,
\end{equation}
and observe that
\begin{equation}\label{etaNbis}
\mathcal{T}_{n,k}=
\begin{pmatrix}
1 	& \eta	& \eta^{[2]_s} & \cdots 	& \eta^{[n-1]_s} \\
1 	& (\eta)^{p^t} & \left(\eta^{[2]_s}\right)^{p^t} & \cdots  	& \left(\eta^{[n-1]_s}\right)^{p^t} \\
\vdots 						& \vdots 						& 			& \vdots \\
1	& \left(\eta\right)^{p^{(k-1)t}} & \left(\eta^{[2]_s}\right)^{p^{(k-1)t}}	& \cdots 	& \left(\eta^{[n-1]_s}\right)^{p^{(k-1)t}} \\
\end{pmatrix}\ .
\end{equation}
Define $\Theta(z):=z^{p^t}$ for any $z\in\mathbb{F}_{q^k}$ and note that $\Theta^j\left(\eta^{[i]_s}\right)=\left(\eta^{[i]_s}\right)^{p^{jt}}$
for any $i,j\in\mathbb{Z}_{\geq 0}$.
Hence 
\begin{equation}\label{etaNter}
\mathcal{T}_{n,k}=
\begin{pmatrix}
1 	& \eta	& \eta^{[2]_s} & \cdots 	& \eta^{[n-1]_s} \\
\Theta(1) 	& \Theta\left(\eta\right) & \Theta\left(\eta^{[2]_s}\right) & \cdots  	& \Theta\left(\eta^{[n-1]_s}\right) \\
\vdots 						& \vdots 						& 			& \vdots \\
\Theta^{k-1}(1)	& \Theta^{k-1}\left(\eta\right) & \Theta^{k-1}\left(\eta^{[2]_s}\right)	& \cdots 	& \Theta^{k-1}\left(\eta^{[n-1]_s}\right) \\
\end{pmatrix}\ ,
\end{equation}
and by \eqref{etaN}, \eqref{etaNbis} and \cite[Corollary 4.13]{LLO-0}, we deduce that

\medskip

\noindent $\text{det}\left[ \mathcal{N}_{i_j}^\theta \left(\eta^{p^{(l-1)t}}\right)\right]_{1\leq j,l\leq k}=0$,\ \ \ for some $\{i_1,\dots ,i_k\}\subseteq
\{0,\dots,n-1\}\iff$ 

\medskip

\noindent $\iff\exists\vec{c}\in\mathbb{F}_q^n\ : \ \mathcal{T}_{n,k}\cdot {}^t\vec{c}={}^t\vec{0}\ ,\ wt(\vec{c})\leq k\iff \exists
\vec{c}\in\mathscr{C}\ : \ wt(\vec{c})\leq k\iff$

\medskip

\noindent $\iff\mathscr{C}$ is not a MDS code, since $\dim\mathscr{C}=n-\deg g(x)=n-k$.
\end{proof}

Finally, a second upper bound for the minimum Hamming distance 
of some skew module codes can be given as a consequence of the proof of Proposition \ref{teo:ultimos1}.

\begin{cor}\label{upper d}
Let $q=p^r$ for some prime $p$ and $r\in\mathbb{Z}_{\geq 1}$. Let $\mathscr{C}\subseteq\F_q^n$ be an $f$-module $\theta$-code generated by $g(x)\in R$. 
Suppose that $\deg g(x)=k$ with $0<k<n$ and that there exist $\eta\in\mathbb{F}_{q^k}^*$ and $t\in\mathbb{Z}$ with $1\leq t\leq r$ such that
$$g(x)=\text{l.l.c.m.}\left\{ x-\eta^{p^{it}}\colon i=0,\dots,k-1\right\}\in R\ .$$
Then 
$$ d(\mathscr{C})\leq \min\left\{ \mathrm{rank}\left( \mathcal{M}_{i_1,\dots,i_k} \right) \colon 1\leq i_1 <\cdots<i_k\leq n\right\}+1\ , $$
where $\mathcal{M}_{i_1,\dots,i_k}$ is a $k\times k$ submatrix of the matrix $\mathcal{T}_{n,k}$ defined by $(\ref{etaN})$.
\end{cor}

\begin{proof}
With the same notation as in the proof of Proposition \ref{teo:ultimos1}, we have 
\begin{equation}\label{eq}
\vec{c}\in\mathscr{C}\ \iff\ \vec{c}\in\mathbb{F}^n_q\ \ \ \mathrm{and}\ \ \ \mathcal{T}_{n,k}\cdot {}^t\vec{c}={}^t\vec{0}\ .
\end{equation}
Let $\delta:=\min\left\{ \text{rank}\left( \mathcal{M}_{i_1,\dots,i_k} \right) \colon 1\leq i_1 <\cdots<i_k\leq n\right\}$ and note that $\delta\leq k$.
If $\delta=k$, then $d(\mathscr{C})\leq k+1=\delta+1$, i.e. $d(\mathscr{C})\leq\delta+1$. 

Thus, assume that $\delta\leq k-1$ and let $\mathcal{M}_{j_1,\dots,j_k}$ be a submatrix 
of $\mathcal{T}_{n,k}$ such that $\delta=\text{rank}\left( \mathcal{M}_{j_1,\dots,j_k} \right)$.
In particular, this gives
$$\det 
\begin{pmatrix}
\eta^{[j_1]_s} & \cdots 	& \eta^{[j_{\delta+1}]_s} \\
\Theta\left(\eta^{[j_1]_s}\right) & \cdots  	& \Theta\left(\eta^{[j_{\delta+1}]_s}\right) \\
\vdots 						& 			& \vdots \\
\Theta^{\delta}\left(\eta^{[j_1]_s}\right)	& \cdots 	& \Theta^{\delta}\left(\eta^{[j_{\delta+1}]_s}\right) \\
\end{pmatrix}
=0\ .$$
By \cite[Corollary 4.13]{LLO-0}, we deduce that $\eta^{[j_1]_s},\dots,\eta^{[j_{\delta+1}]_s}$ are linearly dependent over $\mathbb{F}_q$.
So $(\eta^{[j_1]_s})^{p^{jt}}, \dots, (\eta^{[j_{\delta+1}]_s})^{p^{jt}}$ are linearly dependent over $\mathbb{F}_q$ for any $j,t\in\mathbb{Z}_{\geq 0}$.
Thus it turns out that the columns $j_1,\dots , j_{\delta+1}$ of $\mathcal{T}_{n,k}$ are linearly dependent over $\mathbb{F}_q$, that is,
there exists a non-zero vector 
\[ \vec{c}=(0,\dots,0,c_{j_1},0,\dots,0,c_{j_2},0,\dots,0, c_{j_{\delta+1}},0,\dots,0)\in\F_q^n\]
such that $\mathcal{T}_{n,k}\cdot {}^t\vec{c}={}^t\vec{0}$. Therefore, from \eqref{eq} it follows that there exists a non-zero vector
$\vec{c}\in\mathscr{C}$ with $wt(\vec{c})\leq\delta+1$, i.e. $d(\mathscr{C})\leq\delta+1$.
\end{proof}

\bigskip

\begin{obs}
Let $\mathscr{C}\subseteq \F^n_q$ be a linear $[n,n-k]_q$-code with parity check matrix $H$. From the proof of Corollary \ref{upper d},
one can easily deduce the obvious upper bound
$$d(\mathscr{C})\leq\min\left\{ \text{rank}\left( \mathcal{M}_{k} \right) \colon \mathcal{M}_{k}\ \mathrm{is\ a\ } k\times k\ 
\mathrm{submatrix\ of\ } H \right\}+1\ .$$
\end{obs}

\bigskip
\bigskip


\begin{thebibliography}{2}
\bibitem{AGAS} T. Abualrub, A. Ghrayeb and N. Aydin, I. Siap,
\emph{On the construction of skew quasi-cyclic codes},
IEEE Trans. Inform. Theory {\bf 56} (2010), no. 5, 2081--2090.
\bibitem{B} M. Bhaintwal, \emph{Skew quasi-cyclic codes over Galois rings}, Des. Codes Cryptogr. {\bf 62} (2012), 85--101.
\bibitem{BBT1} A. Blokhuis, A.A. Bruen and J.A. Thas, \emph{Arcs in $PG(n,q)$, MDS-codes and three fundamental problems of B. Segre - some extensions}, Geom. Dedic. {\bf 35} (1990), 1--11.
\bibitem{BBT2} A. Blokhuis, A.A. Bruen and J.A. Thas, \emph{On MDS-codes, arcs in $PG(n,q)$ with $q$ even, and a solution of three fundamental problems of B. Segre}, Invent. Math. {\bf 92} (1988), 441--459.
\bibitem{magma} W. Bosma, J. Cannon and C. Playoust, \emph{The Magma algebra system. I. The user language}, J. Symbolic Comput. {\bf 24} (1997), no. 3-4, 235--265.
\bibitem{B0} D. Boucher, W. Geiselmann and F. Ulmer, \emph{Skew-cyclic codes},  Appl. Algebra Engrg. Comm. Comput. {\bf 18} (2007), no. 4, 379--389.
\bibitem{BSU} D. Boucher, P. Sol\'e and F. Ulmer, \emph{Skew constacyclic codes over Galois rings},
Adv. Math. Commun. {\bf 2} (2008), no. 3, 273--292.
\bibitem{B2} D. Boucher and F. Ulmer, \emph{Codes as modules over skew polynomial rings}, Cryptography and coding, Lecture Notes in Comput. Sci. {\bf 5921}, Springer, Berlin, 2009, 38--55.
\bibitem{B1}  D. Boucher and F. Ulmer, \emph{A note on the dual codes of module skew codes}, Cryptography and coding, Lecture Notes in Comput. Sci. {\bf 7089}, Springer, Heidelberg, 2011, 230--243.
\bibitem{B3} D. Boucher and F. Ulmer,
\emph{Self-dual skew codes and factorization of skew polynomials},
J. Symbolic Comput. {\bf 60} (2014), 47--61.
\bibitem{BL} M'Hammed Boulagouaz, A. Leroy, \emph{$(\sigma,\delta)$--codes}, Adv. Math. Commun. {\bf 7} (2013), no. 4, 463--474.
\bibitem{CL} A. Cherchem, A. Leroy,  \emph{Exponents of skew polynomials}, Finite Fields and Their Appl. {\bf 37} (2016), 1--13.
\bibitem{FG} N. Fogarty and H. Gluesing-Luerssen,
\emph{A circulant approach to skew-constacyclic codes},
Finite Fields Appl. {\bf 35} (2015), 92--114.

\bibitem{HT} C.R. Hartmann, K.K. Tzeng, \emph{Generalizations of the BCH bound}, Information and Control {\bf 20}
(1972), 489--498.

\bibitem{H} J.W.P. Hirschfeld, \emph{Projective geometries over finite fields}, Claredon Press - Oxford, 1979.
\bibitem{JLU} S. Jitman, S. Ling and P. Udomkavanich, \emph{Skew constacyclic codes over finite chain rings},
Adv. Math. Commun. {\bf 6} (2012), no. 1, 39--63.
\bibitem{KM} H. Kaneta, T. Maruta, \emph{An elementary proof and extension of Thas' theorem on $k$-arcs}, Mat. Proc. Camb. Philos. Soc. {\bf 105} (1989), 459--462.

\bibitem{LLO-0}
T.Y. Lam and A. Leroy, \textit{Vandermonde and Wronskian Matrices over Division Rings},
Journal of Algebra \textbf{119} (1988), 308--336.

\bibitem{L} A. Leroy, \emph{Noncommutative polynomial maps}, J. Algebra Appl. {\bf 11} (2012), no. 4, 16~pp.
\bibitem{M1} T. Maruta, \emph{A geometric approach to semi-cyclic codes}, Advances in finite geometries and designs (Chelwood Gate, 1990), Oxford Sci. Publ., Oxford Univ. Press, New York (1991), 311--318.
\bibitem{M3} T. Maruta, \emph{On the existence of cyclic and pseudo-cyclic MDS codes}, Europ. J. Combinatorics {\bf 19} (1998), 159--174.
\bibitem{M2} T. Maruta, M. Shinohara and M. Takenaka, \emph{Constructing linear codes from some orbits of projectivities}, Discrete Math. {\bf 308} (2008), no. 5--6 , 832--841.
\bibitem{ST} L. Storme, J.A. Thas, \emph{M.D.S. codes and arcs in $PG(n,q)$ with $q$ even: an improvement of the bounds of Bruen, Thas, and Blokhuis}, J. Comb. Theory, Ser. A {\bf 62} (1993), 139--154.
\bibitem{TT1} L.F. Tapia Cuiti\~no and A.L. Tironi, \emph{Dual codes of product semi-linear codes}, Linear Algebra Appl. {\bf 457} (2014), 114--153.
\bibitem{TT2} L.F. Tapia Cuiti\~no and A.L. Tironi, \emph{Some properties of skew codes over finite fields}, Des. Codes Cryptogr. {\bf 85} (2017),
no. 2, 359--380.
\bibitem{T} J.A. Thas, \emph{Normal rational curves and $(q+2)$-arcs in a Galois space $S_{q-2,q}(q=2^h)$}, Atti Accad. Naz. Lincei, Rend. {\bf 47} (1969), 249--252.
\end{thebibliography}
\end{document}